\newtheorem{theorem}{Theorem}
\newtheorem{proposition}{Proposition}
\newtheorem{corollary}{Corollary}
\newtheorem*{criterion*}{Criterion}
\newtheorem*{definition*}{Definition}
\newtheorem{definition}{Definition}
\theoremstyle{remark} 
\newtheorem{remark}{Remark}
\DeclareMathOperator{\rank}{rank}
\title{Integrable discretisations of the noncommutative NLS equation}
\date{}
\author[1]{S. Konstantinou-Rizos\thanks{skonstantin84@gmail.com}}
\author[2]{P. Xenitidis\thanks{xenitip@hope.ac.uk}}
\affil[1]{Centre of Integrable Systems, P.G. Demidov Yaroslavl State University, Yaroslavl, Russia}
\affil[2]{School of Mathematics and the Environment, Liverpool Hope University, L16 9JD Liverpool, UK}
\patchcmd{\subequations}{\alph{equation}}{\alphalph{\value{equation}}}{}{}
\begin{document}
	
	\maketitle
	
	\begin{abstract}
		We show how to derive noncommutative versions of integrable partial difference equations using Darboux transformations. As an illustrative example, we use the nonlinear Schr\"odinger (NLS) system. We derive a noncommutative nonlinear Schr\"odinger equation and we construct its integrable discretisations via the compatibility condition of Darboux transformations around the square. In particular, we construct a noncommutative Adler--Yamilov type  system and a noncommutative discrete Toda equation. For the  noncommutative Adler--Yamilov type system we construct B\"acklund transformations.
	\end{abstract}
	
	\bigskip
	
	\hspace{.2cm} \textbf{PACS numbers:} 02.30.Ik, 02.90.+p, 03.65.Fd.
	
	\hspace{.2cm} \textbf{Mathematics Subject Classification 2020:} 35Q55, 16T25.
	
	\hspace{.2cm} \textbf{Keywords:} Darboux transformations, B\"acklund transformations, quad-graph equations, non-
	
	\hspace{2.4cm} commutative partial difference equations, noncommutative integrable lattice equations, 
	
	\hspace{2.4cm} noncommutative NLS equation, noncommutative Adler--Yamilov system, noncommu-
	
	\hspace{2.4cm} tative discrete Toda equation.
	
	\section{Introduction}
	Integrable (systems of) equations of Mathematical Physics are of great interest and importance due to their numerous applications, from the propagation of water waves and soliton theory to electromagnetic interactions, optics, and quantum field theory. From the point of view of Mathematical Physics, the solutions to integrable systems deserve attention due to their algebro-geometric properties, as well as their relation to several theories, such as the theory of Hamiltonian dynamical systems and the theory of Yang--Baxter and Zamolodchikov's tetrahedron maps. Moreover, noncommutative versions of integrable equations have become of great interest over the past few decades. For extended details, we refer to \cite{Sokolov} and references therein. Famous examples include the Korteweg-de Vries (KdV) equation, the NLS equation, the sine-Gordon equation, the Boussinesq equation, and other well-known equations of Mathematical Physics. 
	
	In this paper, we aim to generalise to the noncommutative case various results and methods which have been developed in the field of integrable systems during the last decade. More precisely, we extend the ideas of the Darboux-Lax scheme presented in \cite{SPS} for the discretisation of nonlinear integrable partial differential equations (PDEs) employing Darboux transformations, as well as the ideas in \cite{FKRX} for constructing solutions to nonlinear partial difference equations by deriving auto--B\"acklund transformations systematically. We show that the schemes employed in \cite{SPS, FKRX} can be generalised for discretising noncommutative integrable PDEs and also for constructing solutions to noncommutive partial difference equations. As an illustrative example, we use the NLS equation.
	
	The NLS equation is one of the most fundamental equations of Mathematical Physics with a plethora of applications in the study of small-amplitude gravity waves on the surface of deep inviscid waters, in the study of propagation of light in nonlinear optical fibers and many other applications. The most common form of the NLS  equation is the system
	\begin{equation}\label{NLS-eq}
		p_t=\frac{1}{2}p_{xx}-4p^2q,\quad  q_t=-\frac{1}{2}q_{xx}+4pq^2,
	\end{equation}
	where $p=p(x,t)$ and $q=q(x,t)$, $x\in\mathbb{R}$, $t>0$, and indices denote partial derivatives. Regarding its integrability, the NLS system is equivalent to the zero curvature relation ${\rm{U}}_t-{\rm{V}}_x+{\rm{U}}{\rm{V}}-{\rm{V}}{\rm{U}}=0$, where
	\begin{equation}\label{Lax-NLS}
		{\rm{U}}=\lambda\begin{pmatrix}
			1 & 0\\
			0 & -1
		\end{pmatrix}
		+
		\begin{pmatrix}
			0 & 2p\\
			2q & 0
		\end{pmatrix}, \quad
		{\rm{V}}=\lambda^2 \begin{pmatrix}
			1 & 0\\
			0 & -1
		\end{pmatrix}
		+\lambda \begin{pmatrix}
			0 & 2p\\
			2q & 0
		\end{pmatrix}
		+
		\begin{pmatrix}
			-2pq & p_x\\
			-q_x & 2pq
		\end{pmatrix}.
	\end{equation}

	The paper is structured as follows. In the next section, we introduce our notation and give all the necessary definitions for the text to be self-contained. Specifically, we define Darboux and B\"acklund transformations for integrable PDEs, as well as integrable partial difference equations. Furthermore, we explain wha we mean by integrable discretisation in this context. In section \ref{sec-NLS}, we derive a noncommutative NLS system, its Darboux and B\"acklund transformations, and discuss their reductions using first integrals. In section \ref{Int-discr}, we construct a noncommutative integrable discretisation of the NLS system, which can be reduced to a noncommutative Adler--Yamilov type of system using first integrals. Morevover, we construct a noncommutative fully discrete Toda type equation. Section \ref{Darboux-Baecklund} deals with the derivation of Darboux and B\"acklund transformations for the noncommutative discrete integrable Adler--Yamilov type of system. Finally, in section \ref{conclusions}, we close with some concluding remarks and ideas for potential extension of our results and connections with other recent works.

	\section{Preliminaries}\label{preliminaries}
	
	\subsection{Notation}
	In what follows, we deal with differential and difference equations and $x,t \in {\mathbb{R}}$ will denote the continuous independent variables and $n,m\in\mathbb{Z}$ the discrete ones. The dependence of functions on $x$ and $t$ will be omitted, and their dependence on $n$ and $m$ will be denoted with indices, i.e., $u_{ij} = u(n+i,m+j)$. We will use operators $\partial_x$, $\partial_t$ for differentiation with respect to $x$ and $t$, respectively, and denote with $\mathcal{S}$ and $\mathcal{T}$ the two shift operators defined as $\mathcal{S}^i\mathcal{T}^ju(n,m)=u_{ij}$. 
	
	Commutative variables will be denoted with lower case italic letters, e.g., $p$ and $q$. Noncommutative variables will be denoted with bold italic letters; that is, $\bm{p}$ and $\bm{q}$ with $\bm{p}\bm{q}\neq \bm{q}\bm{p}$ in general. Matrices will be denoted with Roman uppercase letters. The dependence of functions on fields and parameters will be stated explicitly when it is necessary. For example, ${\rm{L}}(f_{00},f_{10};a,\lambda)$ denotes a matrix with elements depending on $f_{00}$, $f_{10}$, $a$ and $\lambda$. A semicolon in the arguments of a function will be used to separate variables, e.g., fields $f_{00}$ and $f_{10}$, from the parameters, such as $a$ and $\lambda$. In particular, in what follows $\lambda$ will be used only to denote the spectral parameter in Lax pairs and will be omitted from the arguments of Lax and Darboux matrices.
	
	Finally, by $\mathfrak{R}$ we denote a noncommutative division ring, namely an associative algebra with unity $1$ where all the non-zero elements $\bm{x}$ have a multiplicative inverse $\bm{x}^{-1}$, i.e., $\bm{x}\bm{x}^{-1}=\bm{x}^{-1}\bm{x}=1$. The centre of $\mathfrak{R}$ is denoted by $C(\mathfrak{R}):=\left\{a\in\mathfrak{R}: a\bm{x}=\bm{x}a, \forall \bm{x}\in\mathfrak{R} \right\}$. 
	
	\subsection{Darboux and B\"acklund transformations}
	
	In this section we give the necessary definitions for Lax pairs, Darboux and B\"acklund transformations for continuous and discrete integrable systems. We start with the corresponding definitions for continous systems before moving on to the discrete case.
	
	Let ${\rm{F}}({\bm{u}},{\bm{u}}_t,{\bm{u}}_x,{\bm{u}}_{xx},\ldots) = 0$ be a system of differential equations for ${\bm{u}}=\left({\bm{u}}^{(1)}(x,t),\ldots,{\bm{u}}^{(k)}(x,t)\right)$. The linear system
	\begin{equation} \label{eq:gen-Lax}
		\partial_x \Psi = {\rm{U}}([{\bm{u}}]) \Psi, ~~~  \partial_t \Psi = {\rm{V}}([{\bm{u}}]) \Psi,
	\end{equation}
	where the $N \times N$ matrices ${\rm U}$ and $\rm{V}$ depend on $\bm{u}$ and their derivatives, as well as the spectral parameter $\lambda$, is a Lax pair for the system ${\rm{F}}=0$ if it is consistent provided that $\bm{u}$ is a solution of the system ${\rm{F}}=0$, and vice versa, i.e.,
	\begin{equation} \label{eq:cont-Lax-comb-cond}
		\partial_t{\rm{U}}- \partial_x{\rm{V}}+{\rm{U}}{\rm{V}}-{\rm{V}}{\rm{U}}=0 ~~~ \Longleftrightarrow ~~~  {\rm{F}}({\bm{u}},{\bm{u}}_t,{\bm{u}}_x,{\bm{u}}_{xx},\ldots) = 0.
	\end{equation}
	
	A Darboux transformation for the Lax pair \eqref{eq:gen-Lax} is a transformation that leaves it covariant. More precisely,
	\begin{definition}\label{def-Darboux}
		Let $\bm{u}$ be a solution of system ${\rm{F}}({\bm{u}},{\bm{u}}_t,{\bm{u}}_x,{\bm{u}}_{xx},\ldots) = 0$ and $\Psi$ be the corresponding solution of the Lax pair \eqref{eq:gen-Lax}. If there exists an invertible matrix ${\rm{M}}$, such that
		\begin{subequations}\label{Def-Darboux}
			\begin{equation} \label{Def-Darboux-M}
				\tilde{\Psi} = {\rm{M}}([{\bm{u}}],[\tilde{\bm{u}}];\epsilon) \Psi,
			\end{equation}
			with
			\begin{equation} \label{Def-Darboux-BT}
				\partial_x {\rm{M}} + {\rm{M}} {\rm{U}}([{\bm{u}}]) = {\rm{U}}([\tilde{\bm{u}}]){\rm{M}}, ~~~ \partial_t {\rm{M}} + {\rm{M}} {\rm{V}}([{\bm{u}}]) = {\rm{V}}([\tilde{\bm{u}}]){\rm{M}},
			\end{equation}
		\end{subequations}
		where $\tilde{\bm{u}}$ is a solution of system ${\rm{F}}(\tilde{\bm{u}},\tilde{\bm{u}}_t,\tilde{\bm{u}}_x,\tilde{\bm{u}}_{xx},\ldots) = 0$ and $\tilde{\Psi}$ is the corresponding solution of Lax pair \eqref{eq:gen-Lax} with $\bm{u}$ replaced by $\tilde{\bm{u}}$, then we call \eqref{Def-Darboux} Darboux transformation for Lax pair \eqref{eq:gen-Lax} and matrix ${\rm{M}}([{\bm{u}}],[\tilde{\bm{u}}];\epsilon)$ Darboux matrix. 
	\end{definition}
	
	System \eqref{Def-Darboux-BT} provides us with another integrability aspect of system ${\rm{F}}=0$, namely an auto-B\"acklund transformation, which can be defined as follows.
	
	\begin{definition}
		Let ${\bm{u}}=\left({\bm{u}}^{(1)}(x,t),\ldots,{\bm{u}}^{(k)}(x,t)\right)$ be a solution to the system of PDEs
		\begin{equation} \label{eq:BT-F-eq}
			{\rm{F}}({\bm{u}},{\bm{u}}_t,{\bm{u}}_x,{\bm{u}}_{xx},\ldots) = 0.
		\end{equation}
		Let also ${\bm{u}}$ and ${\bm{v}}=\left({\bm{v}}^{(1)}(x,t),\ldots,{\bm{v}}^{(k)}(x,t)\right)$ satisfy the following system of PDEs
		\begin{equation}\label{BT-c}
			\mathcal{B}_i({\bm{u}},{\bm{u}}_{x},{\bm{u}_{t}},\ldots, {\bm{v}},{\bm{v}}_x,{\bm{v}}_t,\ldots ;\epsilon)=0,\quad i=1,2.
		\end{equation}
		If system $\mathcal{B}_i=0$ can be integrated for ${\bm{u}}$, and the resulting ${\bm{v}}={\bm{v}}(x,t)$ is a solution to 
		\begin{equation} \label{eq:BT-G-eq}
			G({\bm{v}},{\bm{v}}_t,{\bm{v}}_x,{\bm{v}}_{xx},\ldots) =0,
		\end{equation}
		and vice versa, then system \eqref{BT-c} is called a B\"acklund transformation for equations \eqref{eq:BT-F-eq} and \eqref{eq:BT-G-eq}. If ${\rm{F}}\equiv {\rm{G}}$, then \eqref{BT-c} is called an auto-B\"acklund transformation for equation \eqref{eq:BT-F-eq}. Parameter $\epsilon$ in the B\"acklund transformation \eqref{BT-c} is also referred to as the B\"acklund parameter.
	\end{definition}

	Now, we consider integrable systems of quad equations, namely systems of the form
	$${\rm{Q}}({\bm{u}}_{00},{\bm{u}}_{10},{\bm{u}}_{01},{\bm{u}}_{11})=0, ~~~{\mbox{with }}~ {\bm{u}}_{i,j}= \left({\bm{u}}^{(1)}_{i,j},\ldots,{\bm{u}}^{(k)}_{i,j}\right).$$
	The linear system
	\begin{equation}\label{eq:def-dis-Lax-pair}
		{\cal{S}}(\Psi) = {\rm{M}}({\bm{u}}_{00},{\bm{u}}_{10})\Psi, ~~~ {\cal{T}}(\Psi) = {\rm{L}}({\bm{u}}_{00},{\bm{u}}_{01})\Psi,
	\end{equation}
	is a Lax pair for system ${\rm{Q}}=0$ if it is consistent provided that $\bm{u}$ is a solution of the system ${\rm{Q}}= 0$, and vice versa, i.e.,
	\begin{equation}\label{Lax-rep}
		{\rm{M}}(\bm{u}_{01},\bm{u}_{11}){\rm{L}}(\bm{u}_{00},\bm{u}_{01})={\rm{L}}(\bm{u}_{10},\bm{u}_{11}){\rm{M}}(\bm{u}_{00},\bm{u}_{10}) \quad \Longleftrightarrow \quad  {\rm{Q}}(\bm{u}_{00},\bm{u}_{10},\bm{u}_{01},\bm{u}_{11})=0.
	\end{equation}
	The existence of a Lax pair is used as a definition for the integrability of partial difference equations (P$\Delta$Es), see \cite{Hiet-Frank-Joshi} and the references therein. 
	
	We can define Darboux and B\"acklund transformations for integrable quad-graph systems in a way similar to the continuous case.
	
	\begin{definition}\label{Darboux-transform}
		Let $\bm{u}$ be a solution of system ${\rm{Q}}({\bm{u}}_{00},{\bm{u}}_{10},{\bm{u}}_{01},{\bm{u}}_{11}) = 0$ and $\Psi$ be the corresponding solution of the Lax pair \eqref{eq:def-dis-Lax-pair}. If there exists an invertible matrix ${\rm{B}}$, such that
		\begin{subequations}\label{Def-disc-Darboux}
			\begin{equation} \label{Def-disc-Darboux-B}
				\tilde{\Psi} = {\rm{B}}([{\bm{u}}],[\tilde{\bm{u}}];\epsilon) \Psi,
			\end{equation}
			with
			\begin{equation} \label{Def-disc-Darboux-BT}
				{\cal{S}}({\rm{B}}){\rm{M}}(\bm{u}_{00},\bm{u}_{10})={\rm{M}}(\tilde{\bm{u}}_{00},\tilde{\bm{u}}_{10}){\rm{B}}, ~~~  {\cal{T}}({\rm{B}}){\rm{L}}(\bm{u}_{00},\bm{u}_{01})={\rm{L}}(\tilde{\bm{u}}_{00},\tilde{\bm{u}}_{01}){\rm{B}}
			\end{equation}
		\end{subequations}
		where $\tilde{\bm{u}}$ is a solution of system ${\rm{Q}}({\tilde{\bm{u}}}_{00},{\tilde{\bm{u}}}_{10},{\tilde{\bm{u}}}_{01},{\tilde{\bm{u}}}_{11})  = 0$ and $\tilde{\Psi}$ is the corresponding solution of Lax pair \eqref{eq:def-dis-Lax-pair} with $\bm{u}$ replaced by $\tilde{\bm{u}}$, then we call \eqref{Def-disc-Darboux} Darboux transformation for Lax pair \eqref{eq:def-dis-Lax-pair} and matrix ${\rm{B}}([{\bm{u}}],[\tilde{\bm{u}}];\epsilon)$ Darboux matrix.
	\end{definition}
	
	As in the continuous case, system \eqref{Def-disc-Darboux-BT} provides us with an auto-B\"acklund transformation for the discrete system.
	
	\begin{definition}
		Let ${\bm{u}}_{00}=\left({\bm{u}}^{(1)}_{00},\ldots,{\bm{u}}^{(k)}_{00}\right)$ be a solution to the system of difference equations
		\begin{equation} \label{eq:dis-BT-Q}
			{\rm{Q}}({\bm{u}}_{00},{\bm{u}}_{10},{\bm{u}}_{01},{\bm{u}}_{11})=0.
		\end{equation}
		Let also $\bm{u}_{00}$ and $\bm{v}_{00}=\left({\bm{v}}^{(1)}_{00},\ldots,{\bm{v}}^{(k)}_{00}\right)$ satisfy the system of P$\Delta$Es
		\begin{equation}\label{BT}
			\mathcal{B}_i(\bm{u}_{00},\bm{u}_{10},\bm{u}_{01},\bm{u}_{11},\ldots,\bm{v}_{00},\bm{v}_{10},\bm{v}_{01},\bm{v}_{11},\ldots ;\epsilon)=0,\quad i=1,2.
		\end{equation}
		If system $\mathcal{B}_i=0$ can be integrated for $\bm{u}$ and the resulting ${\bm{v}}(n,m)$ is a solution to 
		\begin{equation} \label{eq:dis-BT-P}
			{\rm{P}}(\bm{v}_{00},\bm{v}_{10},\bm{v}_{01})=0,
		\end{equation}
		and vice versa, then system \eqref{BT} is called a B\"acklund transformation for equations \eqref{eq:dis-BT-Q} and \eqref{eq:dis-BT-P}. If $\rm{Q}\equiv\rm{P}$, then \eqref{BT} is called an auto-B\"acklund transformation for equation \eqref{eq:dis-BT-Q}. Parameter $\epsilon$ in \eqref{BT} is also referred to as B\"acklund parameter.
	\end{definition}

	\subsection{Integrable discretisations of PDEs} \label{sec:discretisation}
	
	A connection between integrable PDEs and quad equations is provided via Darboux and auto-B\"acklund transformations and their superposition principle, also known as Bianchi commuting diagram.
	
	Let ${\rm{F}}({\bm{u}},{\bm{u}}_t,{\bm{u}}_x,{\bm{u}}_{xx},\ldots) =0$ be a system of integrable PDEs with Lax pair \eqref{eq:gen-Lax} and two Darboux matrices ${\rm{M}}={\rm{M}}(\bm{u},\bm{u}_{10};\alpha)$ and ${\rm{K}}={\rm{K}}(\bm{u},\bm{u}_{01};\beta)$\footnote{The two Darboux matrices are not necessarily different. Starting with ${\rm{M}}$ one can easily construct the second Darboux matrix ${\rm{K}}$ by replacing $\bm{u}_{10}$ and $\alpha$ with $\bm{u}_{01}$ and $\beta$, respectively, i.e. ${\rm{K}}={\rm{M}}(\bm{u}_{00},\bm{u}_{01};\beta)$.}, where ${\bm{u}}_{10}$ and ${\bm{u}}_{01}$ are the two new solutions of the system and $\alpha$ and $\beta$ are the corresponding B\"acklund parameters.  Hence, we can derive two new solutions from the same ``seed'' solution $\bm{u}$ by employing two Darboux matrices and the corresponding transformations.
	
	Starting with the new solutions ${\bm{u}}_{10}$ and ${\bm{u}}_{01}$ and Darboux matrices ${\rm{K}}_{10}={\rm{K}}(\bm{u}_{10},\bm{u}_{11};\beta)$ and  ${\rm{M}}_{01}={\rm{M}}(\bm{u}_{01},\hat{\bm{u}}_{11};\alpha)$, respectively, we can derive two new solutions $\bm{u}_{11}$ and $\hat{\bm{u}}_{11}$. If we require these two solutions to be equal, $\hat{\bm{u}}_{11} = \bm{u}_{11}$, then the new solution $\bm{u}_{11}$ follows algebraically from the relation ${\rm{K}}_{10} {\rm{M}} = {\rm{M}}_{01} {\rm{K}}$ according to the Bianchi commuting diagram \eqref{bianchi}.
	
	The above procedure is clearly discrete and this allows us to interpret function $\bm{u}$ and auxiliary matrix $\Psi$ of the Lax pair as depending on two discrete variables $n$ and $m$, and rewrite the Darboux transformations as a discrete overdetermined system for $\Psi$ in the following way.
	\begin{equation} \label{eq:DT-Lax}
		\begin{cases}
			&\mathcal{S}({\rm{\Psi}})={\rm{\Psi}_{10}}={{\rm{M}}}(\bm{u},\bm{u}_{10};\alpha){\rm{\Psi}}_{00},\\
			&\mathcal{T}({\rm{\Psi}})={\rm{\Psi}}_{01}={{\rm{K}}}(\bm{u},\bm{u}_{01};\beta){\rm{\Psi}}_{00},
		\end{cases}
	\end{equation}
	where indices now denote shifts in $n$ and $m$. The compatibility condition
	$${{\rm{M}}}(\bm{u}_{01},\bm{u}_{11};\alpha){\rm{K}}(\bm{u},\bm{u}_{01};\beta)={\rm{K}}(\bm{u}_{10},\bm{u}_{11};\beta){\rm{M}}(\bm{u},\bm{u}_{10};\alpha)$$
	of system \eqref{eq:DT-Lax} yields a system of P$\Delta$Es for $\bm{u}$ which is integrable since it possesses system \eqref{eq:DT-Lax} as its Lax pair. We call the resulting system of P$\Delta$Es an \textit{integrable discretisation} of the system of PDEs ${\rm{F}}({\bm{u}},{\bm{u}}_t,{\bm{u}}_x,{\bm{u}}_{xx},\ldots) =0$.

	\begin{figure}[ht]
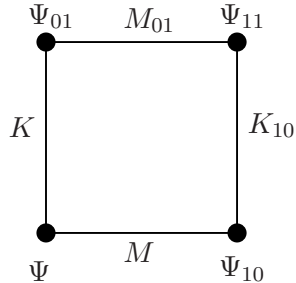

		\centertexdraw{ \setunitscale 0.5
			\linewd 0.02 \arrowheadtype t:F 
			\htext(0 0.5) {\phantom{T}}
			\move (-1 -2) \lvec (1 -2) 
			\move(-1 -2) \lvec (-1 0) \move(1 -2) \lvec (1 0) \move(-1 0) \lvec(1 0)
			\move (1 -2) \fcir f:0.0 r:0.1 \move (-1 -2) \fcir f:0.0 r:0.1
			\move (-1 0) \fcir f:0.0 r:0.1 \move (1 0) \fcir f:0.0 r:0.1  
			\htext (-1.2 -2.5) {$\Psi$} \htext (.8 -2.5) {$\Psi_{10}$} \htext (-.2 -2.3) {$M$}
			\htext (-1.2 .15) {$\Psi_{01}$} \htext (.8 .15) {$\Psi_{11}$} \htext (-.2 .1) {$M_{01}$}
			\htext (-1.4 -1) {$K$} \htext (1.1 -1) {$K_{10}$}}
		\caption{{Bianchi commuting diagram}} \label{bianchi}\index{Bianchi's!commuting diagram.}
	\end{figure}

	\section{A noncommutative NLS equation and its Darboux and B\"acklund transformations}\label{sec-NLS}
	
	In this section, we consider the Lax pair
	\begin{equation}\label{Lax-NLS-NC}
		\begin{array}{l}
			\partial_x \Psi = {\rm{U}}({\bm{p}},{\bm{q}}) \Psi = \left\{\lambda\begin{pmatrix}
				1 & 0\\
				0 & -1
			\end{pmatrix}
			+
			\begin{pmatrix}
				0 & 2\bm{p}\\
				2\bm{q} & 0
			\end{pmatrix} \right\}\Psi, \\
			\\
			\partial_t \Psi =  {\rm{V}}({\bm{p}},{\bm{q}}) \Psi= \left\{ \lambda^2 \begin{pmatrix}
				1 & 0\\
				0 & -1
			\end{pmatrix}
			+\lambda \begin{pmatrix}
				0 & 2\bm{p}\\
				2\bm{q} & 0
			\end{pmatrix}
			+
			\begin{pmatrix}
				-2\bm{p}\bm{q} & \bm{p}_x\\
				-\bm{q}_x & 2\bm{q}\bm{p}
			\end{pmatrix} \right\} \Psi,
		\end{array}
	\end{equation}
	the compatibility condition of which yields the noncommutative system of PDEs
	\begin{equation}\label{NLS-eq-NC}
		\bm{p}_t=\frac{1}{2}\bm{p}_{xx}+4\bm{p}\bm{q}\bm{p}, ~~~ \bm{q}_t=-\frac{1}{2}\bm{q}_{xx}-4\bm{q}\bm{p}\bm{q}.
	\end{equation}
	We refer to this system as the noncommutative NLS system because it becomes system \eqref{NLS-eq} if we replace $\bm{p}$ and $\bm{q}$ with the commutative variables $p$ and $q$.
	
	\begin{remark}
		System \eqref{NLS-eq-NC} was derived in \cite{OS} in the classification of integrable evolution equations on associative algebras, and its first matrix generalisation of the NLS system is due to Manakov \cite{Manakov}.
	\end{remark}
	
	Our aim is to find Darboux matrices and corresponding transformations for the Lax pair \eqref{Lax-NLS-NC}. A Darboux matrix $\rm{M}$ for \eqref{Lax-NLS-NC} can be determined by the relations
	\begin{equation}\label{M-equation}
		\partial_x{\rm M}+{\rm M}{\rm U}-\tilde{\rm{U}} {\rm M}=0,\quad \partial_t{\rm M}+{\rm M}{\rm V}-\tilde{\rm{V}} {\rm M}=0, ~~ {\mbox{with }} \tilde{\rm{U}} = {\rm{U}}(\tilde{\bm{p}},\tilde{\bm{q}}), ~ \tilde{\rm{V}} = {\rm{V}}(\tilde{\bm{p}},\tilde{\bm{q}}),
	\end{equation}
	according to Definition \ref{def-Darboux}. In order to find ${\rm M}$ we need to assume its initial form. Here, we consider only matrices ${\rm M}$ which are linear in the spectral parameter $\lambda$, i.e. ${\rm M}=\lambda {\rm M}^{(1)}+{\rm M}^{(0)}$, and matrix ${\rm M}^{(1)}$ is of rank 1. These considerations lead to the following result.
	
	\begin{theorem}
		Let $\rm{M}=\lambda {\rm M}^{(1)} +{\rm M}^{(0)}$, where $\rank {\rm M}^{(1)}=1$, be a Darboux matrix associated with the Lax pair \eqref{Lax-NLS-NC} of the noncommutative NLS equation \eqref{NLS-eq-NC}. All the Darboux matrices of this form  fall into one of the following two cases.
		\begin{enumerate}
			\item Darboux matrix 
			\begin{subequations} \label{eq:T-10}
				\begin{equation}\label{DT-10}
					{\rm{M}}(\bm{f},\bm{p},\tilde{\bm{q}})=\lambda\begin{pmatrix}
						1 & 0 \\
						0 & 0
					\end{pmatrix}
					+
					\begin{pmatrix}
						\bm{f} & \bm{p} \\
						\tilde{\bm{q}} & 1
					\end{pmatrix},
				\end{equation}
				with $\tilde{\bm{p}}$ and $\tilde{\bm{q}}$ determined by the following system of differential equations.
				\begin{equation}\label{BT-10}
					\bm{f}_x=2\tilde{\bm{p}}\tilde{\bm{q}}-2\bm{p}\bm{q},\quad\bm{p}_x=2\tilde{\bm{p}} -2\bm{f}\bm{p},\quad \tilde{\bm{q}}_x=2\tilde{\bm{q}}\bm{f}-2\bm{q}.
				\end{equation}
			\end{subequations}
			
			\item Darboux matrix
			\begin{subequations} \label{eq:T-01}
				\begin{equation}\label{DT-01}
					{\rm K}(\bm{g},\tilde{\bm{p}},\bm{q})=\lambda\begin{pmatrix}
						0 & 0 \\
						0 & -1
					\end{pmatrix}
					+
					\begin{pmatrix}
						1 & \tilde{\bm{p}} \\
						\bm{q} & \bm{g}
					\end{pmatrix},
				\end{equation}
				with $\tilde{\bm{p}}$ and $\tilde{\bm{q}}$ determined by the following system of differential equations.
				\begin{equation}\label{BT-01}
					\bm{g}_x=2\tilde{\bm{q}}\tilde{\bm{p}}-2\bm{q}\bm{p},\quad \tilde{\bm{p}}_{x}=2\tilde{\bm{p}}\bm{g}-2\bm{p},\quad \bm{q}_x=2\tilde{\bm{q}} -2\bm{g}\bm{q}.
				\end{equation}
			\end{subequations}
			
		\end{enumerate}
		
		There are two degenerated Darboux matrices and corresponding transformations.
		\begin{enumerate}
			\item Darboux matrix
			\begin{subequations} \label{eq:T-10-deg}
				\begin{equation}\label{DT-10-deg}
					{{\rm{M}}}(\bm{f},\bm{p})=\lambda\begin{pmatrix}
						1 & 0 \\
						0 & 0
					\end{pmatrix}
					+
					\begin{pmatrix}
						\bm{f} & \bm{p} \\
						\bm{p}^{-1} & 0
					\end{pmatrix}, 
				\end{equation}
				with $\tilde{\bm{p}}$ and $\tilde{\bm{q}}$ given by
				\begin{equation}\label{BT-10-deg}
					2  \tilde{\bm{p}} = \left( 2 {\bm{p}} {\bm{q}}+ {\bm{f}}_x\right) {\bm{p}},~~~ \tilde{\bm{q}} = {\bm{p}}^{-1}, ~~~  \bm{f}=-\frac{1}{2}\bm{p}_x\bm{p}^{-1}.
				\end{equation}
			\end{subequations}

			\item Darboux matrix
			\begin{subequations} \label{eq:T-01-deg}
				\begin{equation}\label{DT-01-deg}
					{\rm K}(\bm{g},\bm{q})=\lambda\begin{pmatrix}
						0 & 0 \\
						0 & -1
					\end{pmatrix}
					+
					\begin{pmatrix}
						0 &  \bm{q}^{-1} \\
						\bm{q} & \bm{g}
					\end{pmatrix},
				\end{equation}
				with $\tilde{\bm{p}}$ and $\tilde{\bm{q}}$ given by
				\begin{equation}\label{BT-01-deg}
					\tilde{\bm{p}} = {\bm{q}}^{-1}, ~~~ 2 \tilde{\bm{q}}  = \left( 2{\bm{q}}{\bm{p}} + {\bm{g}}_x\right) {\bm{q}},~~~   \bm{g}=-\frac{1}{2}\bm{q}_x\bm{q}^{-1}.
				\end{equation}
			\end{subequations}
		\end{enumerate}
		
	\end{theorem}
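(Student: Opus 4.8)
The plan is to insert the ansatz $\mathrm{M}=\lambda\mathrm{M}^{(1)}+\mathrm{M}^{(0)}$ into the defining relations \eqref{M-equation} and to collect coefficients of the powers of $\lambda$. Writing $\mathrm{U}=\lambda\sigma+\mathrm{P}$ and $\mathrm{V}=\lambda^2\sigma+\lambda\mathrm{P}+\mathrm{W}$, where $\sigma=\mathrm{diag}(1,-1)$, $\mathrm{P}=\left(\begin{smallmatrix}0&2\bm p\\2\bm q&0\end{smallmatrix}\right)$ and $\mathrm{W}=\left(\begin{smallmatrix}-2\bm p\bm q&\bm p_x\\-\bm q_x&2\bm q\bm p\end{smallmatrix}\right)$, the $x$-part of \eqref{M-equation} splits into three relations (orders $\lambda^2,\lambda^1,\lambda^0$) and the $t$-part into four (orders $\lambda^3,\dots,\lambda^0$). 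First I would exploit the leading orders: both give $\mathrm{M}^{(1)}\sigma-\sigma\mathrm{M}^{(1)}=0$, so $\mathrm{M}^{(1)}$ is diagonal. Since $\mathfrak R$ is a division ring, every nonzero diagonal entry is invertible, so $\rank\mathrm{M}^{(1)}=1$ forces exactly one of the two diagonal entries to vanish. This is precisely the dichotomy $\mathrm{M}^{(1)}=\mathrm{diag}(a,0)$ versus $\mathrm{M}^{(1)}=\mathrm{diag}(0,d)$ underlying the two cases of the statement.

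Next I would treat the case $\mathrm{M}^{(1)}=\mathrm{diag}(a,0)$ in detail. The $\lambda^1$-relation of the $x$-part yields $a_x=0$ together with $\mathrm{M}^{(0)}_{12}=a\bm p$ and $\mathrm{M}^{(0)}_{21}=\tilde{\bm q}a$, and it leaves the entry $\mathrm{M}^{(0)}_{22}=:\bm s$ free, while the $\lambda^0$-relation gives $\bm s_x=0$ and the three evolution equations that, after these identifications, become \eqref{BT-10} with the constants $a,\bm s$ inserted. Since $a$ and $\bm s$ are constant (also in $t$, as the corresponding orders of the $t$-part show), and nonzero in the generic branch, they can be normalised to $1$ by the constant gauge $\mathrm{M}\mapsto\mathrm{diag}(a^{-1},\bm s^{-1})\,\mathrm{M}$; this preserves the Darboux relations and merely relabels the transformed fields $\tilde{\bm p},\tilde{\bm q}$. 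The outcome is exactly the Darboux matrix \eqref{DT-10} together with the system \eqref{BT-10}.

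For the degenerate branch I would set the free constant $\bm s=0$. Then the $\lambda^0$-relation forces $\bm p_x=-2\bm f\bm p$, i.e. $\bm f=-\tfrac12\bm p_x\bm p^{-1}$, and reduces the $(2,1)$-entry to the linear equation $\tilde{\bm q}_x=2\tilde{\bm q}\bm f=-\tilde{\bm q}\bm p_x\bm p^{-1}$. Its general solution in $\mathfrak R$ is $\tilde{\bm q}=C\bm p^{-1}$ with $C$ constant, and after normalising $C=1$ (again by a constant gauge) the remaining relation $\bm f_x=2\tilde{\bm p}\tilde{\bm q}-2\bm p\bm q$ determines $2\tilde{\bm p}=(2\bm p\bm q+\bm f_x)\bm p$. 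This is precisely \eqref{DT-10-deg}--\eqref{BT-10-deg}; the invertibility of $\bm p$ used here is where the division-ring hypothesis is indispensable. The case $\mathrm{M}^{(1)}=\mathrm{diag}(0,d)$, together with its degeneration, follows by the same computation, or more economically from the symmetry $\bm p\leftrightarrow\bm q$, $\lambda\mapsto-\lambda$, conjugation by $\mathrm{J}=\left(\begin{smallmatrix}0&1\\1&0\end{smallmatrix}\right)$, under which $\mathrm{U}(\bm p,\bm q)\mapsto\mathrm{U}(\bm q,\bm p)$; it produces \eqref{DT-01}--\eqref{BT-01} and \eqref{DT-01-deg}--\eqref{BT-01-deg}.

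Finally, I would close the argument by verifying that the $t$-part of \eqref{M-equation} is consistent: its two leading orders reproduce the same algebraic identities already obtained from the $x$-part, while its $\lambda^0$-order is satisfied identically once $\bm p,\bm q$ and $\tilde{\bm p},\tilde{\bm q}$ solve the noncommutative NLS system \eqref{NLS-eq-NC}. The main obstacle I anticipate is not conceptual but organisational: every product must be kept in its correct order throughout the noncommutative expansion, and in the degenerate branch one must integrate the linear matrix ODE for $\tilde{\bm q}$ and recognise the solution $\tilde{\bm q}=\bm p^{-1}$, a step that genuinely relies on working over a division ring rather than a commutative field.
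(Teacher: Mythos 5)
Your proposal follows essentially the same route as the paper's proof: expand ${\rm M}$ in powers of $\lambda$, use the leading order to force ${\rm M}^{(1)}$ diagonal and the rank-one condition (via the division-ring hypothesis) to kill exactly one diagonal entry, then split the surviving case according to whether the free constant $(2,2)$-entry of ${\rm M}^{(0)}$ vanishes, recovering \eqref{eq:T-10} and \eqref{eq:T-10-deg} after normalising the constants. The only cosmetic differences are that you implement the normalisation as a constant left gauge ${\rm M}\mapsto{\rm D}{\rm M}$ (which conjugates $\tilde{\rm U}$ and hence relabels $\tilde{\bm p},\tilde{\bm q}$, equivalent to the field scalings the paper invokes) and that you obtain the second family from the $\bm p\leftrightarrow\bm q$, $\lambda\mapsto-\lambda$ conjugation symmetry rather than repeating the computation; both are sound.
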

	
	\begin{proof}
		Let us write matrices $\rm{U}$, $\tilde{\rm{U}}$, and $\rm{M}$ as
		$$ {\rm{U}} = \lambda {\rm{U}}^{(1)} + {\rm{U}}^{(0)},~~~ \tilde{\rm{U}} = \lambda {\rm{U}}^{(1)} + \tilde{\rm{U}}^{(0)}, ~~\mbox{and}~~ {\rm{M}}=\lambda {\rm{M}}^{(1)} + {\rm{M}}^{(0)} = \lambda\begin{pmatrix}\bm{\alpha} & \bm{\beta}\\ \bm{\gamma} & \bm{\delta} \end{pmatrix}+\begin{pmatrix}
			\bm{f} & \bm{r} \\ \bm{s} & \bm{g} \end{pmatrix},$$
		respectively. Substituting these forms into \eqref{M-equation} and equating the coefficients of the various powers of $\lambda$ to zero, we obtain the following system.
		\begin{subequations} \label{eq:der-DT}
			\begin{align}
				&\lambda^2:\quad  {\rm{M}}^{(1)}\rm{U}^{(1)}-\rm{U}^{(1)}{\rm{M}}^{(1)}=0;\label{l2}\\
				&\lambda^1:\quad {\rm{M}}^{(1)}_x+{\rm{M}}^{(1)}\rm{U}^{(0)}+{\rm{M}}^{(0)}\rm{U}^{(1)}-\tilde{\rm{U}}^{(1)}{\rm{M}}^{(0)}-{\rm{U}^{(0)}}{\rm{M}}^{(1)}=0;\label{l1}\\
				&\lambda^0: \quad {\rm{M}}_x^{(0)}+{\rm{M}}^{(0)}\rm{U}^{(0)}-\tilde{\rm{U}}^{(0)}{\rm{M}}^{(0)}=0.\label{l0}
			\end{align}
		\end{subequations}
		
		Equation \eqref{l2} implies that $\bm{\beta}=\bm{\delta}=0$, and from \eqref{l1} we obtain $\bm{\alpha}_x=\bm{\delta}_x=0$, along with
		\begin{equation} \label{l0b}
			\bm{r}=\bm{\alpha}\bm{p}-\tilde{\bm{p}}\bm{\delta},\quad \bm{s}=\tilde{\bm{q}}\bm{\alpha}-\bm{\delta} \bm{q}.
		\end{equation}
		Since $\rank {{\rm{M}}^{(1)}}=1$, one of $\bm{\alpha}$ or $\bm{\delta}$ must be zero. Consider the case where $\bm{\alpha}\neq 0$ and $\bm{\delta}=0$. Then the relations in \eqref{l0b} reduce to $\bm{r}=\bm{\alpha p}$ and  $\bm{s}=\tilde{\bm{q}}\bm{\alpha}$. Moreover, equation \eqref{l0} implies $\bm{g}_x=0$, i.e. $\bm{g}$ does not depend on $x$. With these choices, the remaining equations in \eqref{eq:der-DT} yield
		\begin{equation} \label{l0c}
			\bm{f}_x=2\tilde{\bm{p}}\tilde{\bm{q}}\bm{\alpha}-2\bm{\alpha} \bm{p}\bm{q},\quad \bm{\alpha} \bm{p}_x=2\tilde{\bm{p}} \bm{g} -2\bm{f}\bm{p},\quad \tilde{\bm{q}}_x \bm{\alpha} =2\tilde{\bm{q}}\bm{f}-2\bm{g q}. 
		\end{equation}
		
		If $\bm{g} \ne 0$, we may, without loss of generality, choose $\bm{\alpha} = \bm{g} = 1$. Indeed, using the scalings 
		$$\bm{f} \rightarrow \bm{f \alpha}, ~~  \bm{p} \rightarrow \bm{\alpha}^{-1} \bm{p g},~~ \bm{q} \rightarrow \bm{g}^{-1} \bm{q \alpha},$$ 
		which leave equation \eqref{NLS-eq-NC} invariant\footnote{System \eqref{NLS-eq-NC} is invariant under the scalings $(\bm{p},\bm{q}) \rightarrow (\bm{a p}, \bm{q} \bm{a}^{-1})$ and $(\bm{p},\bm{q}) \rightarrow (\bm{p b}, \bm{b}^{-1} \bm{q})$, where $\bm{a}$ and $\bm{b}$ are independent of $x$ and $t$.}, we can factor out $\bm{\alpha}$ and $\bm{g}$ in \eqref{l0c}. These considerations yield the Darboux matrix \eqref{DT-10}, and the corresponding transformation \eqref{BT-10}. 
		
		If $\bm{g}=0$, we can set $\bm{\alpha}$ to 1 by applying the scalings 
		$$\bm{f} \rightarrow \bm{f \alpha}, ~~  \bm{p} \rightarrow \bm{\alpha}^{-1} \bm{p}, ~~ \bm{q} \rightarrow \bm{q \alpha}.$$ 
		Then the last two equations in \eqref{l0c} become $\bm{p}_x = -2 \bm{fp}$ and $\tilde{\bm{q}}_{x}=2\tilde{\bm{q}}\bm{f}$. The former equation implies $\bm{f}=-\tfrac{1}{2}\bm{p}_x\bm{p}^{-1}$, in view of which the latter becomes $\tilde{\bm{q}}_{x}\bm{p}+\tilde{\bm{q}}\bm{p}_x=0$, leading to $\tilde{\bm{q}}= \bm{\epsilon}\bm{p}^{-1}$, where $\bm{\epsilon}$ is independent of $x$. Applying the transformation $\tilde{\bm{p}} \rightarrow \tilde{\bm{p}} \bm{\epsilon}^{-1}$, $\tilde{\bm{q}} \rightarrow \bm{\epsilon} \tilde{\bm{q}}$, we can normalise $\bm{\epsilon}$ to $1$, and thus obtain \eqref{eq:T-10-deg}.
		
		In a similar manner, we can treat the case when $\bm{\alpha} = 0$ and $\bm{\delta} \ne 0$. This leads to the two Darboux matrices and their corresponding transformations given in \eqref{eq:T-01} and \eqref{eq:T-01-deg}, respectively.
	\end{proof}
	
	We can simplify equations \eqref{eq:T-10} and \eqref{eq:T-01} further by exploiting the relations \eqref{BT-10} and \eqref{BT-01}, respectively. Specifically, from \eqref{BT-10}, it follows that $\partial_x (\bm{f}-\bm{p}\tilde{\bm{q}}) = 2 [\bm{f},\bm{p}\tilde{\bm{q}}]$, which implies the solution $\bm{f}=\bm{p}\tilde{\bm{q}}+a$, where $a \in C({\mathfrak{R}})$ is independent of $x$. With this choice of $\bm{f}$, the first equation in \eqref{BT-10} becomes an identity, and the remaining two reduce to relations among $\bm{p}$, $\bm{q}$, $\tilde{\bm{p}}$, and $\tilde{\bm{q}}$. 
	
	Similarly, using \eqref{BT-01}, we obtain $\partial_x (\bm{g}-\bm{q}\tilde{\bm{p}}) = 2 [\bm{g},\bm{q}\tilde{\bm{p}}]$, which leads to $\bm{g}=\bm{q}\tilde{\bm{p}}+b$, where $b \in C({\mathfrak{R}})$ is also independent of $x$. With this substitution, the first equation in \eqref{BT-01} becomes an identity, and the remaining two yield relations involving only the fields $\bm{p}$, $\bm{q}$, $\tilde{\bm{p}}$, and $\tilde{\bm{q}}$.
	
	This analysis can be summarised as follows.
	
	\begin{corollary}
		The Darboux transformations \eqref{eq:T-10} and \eqref{eq:T-01} take the following forms.
		\begin{enumerate}
			
			\item Darboux matrix 
			\begin{subequations} \label{eq:DT2-10-all}
				\begin{equation}\label{DT2-10}
					{\rm M}(\bm{p},\tilde{\bm{q}};a) =\lambda\begin{pmatrix}
						1 & 0 \\
						0 & 0
					\end{pmatrix}
					+
					\begin{pmatrix}
						\bm{p}\tilde{\bm{q}} + a & \bm{p} \\
						\tilde{\bm{q}} & 1
					\end{pmatrix},
				\end{equation}
				and $\tilde{\bm{p}}$ and $\tilde{\bm{q}}$ can be determined by the system of differential equations:
				\begin{equation}\label{BT2-10}
					\bm{p}_x=2 \tilde{\bm{p}} - 2 \left(\bm{p} \tilde{\bm{q}} + a\right)\bm{p} ,\quad 
					\tilde{\bm{q}}_x=2 \tilde{\bm{q}} \left( \bm{p}  \tilde{\bm{q}} +a \right)-2\bm{q}.
				\end{equation}
			\end{subequations}
			
			\item Matrix
			\begin{subequations} \label{eq:DT2-01-all}
				\begin{equation}\label{DT2-01}
					{\rm K}(\tilde{\bm{p}}, \bm{q};b)=\lambda\begin{pmatrix}
						0 & 0 \\
						0 & -1
					\end{pmatrix}
					+
					\begin{pmatrix}
						1 & \tilde{\bm{p}} \\
						\bm{q} & \bm{q} \tilde{\bm{p}} + b
					\end{pmatrix},
				\end{equation}
				and $\tilde{\bm{p}}$ and $\tilde{\bm{q}}$ can be determined by the system of differential equations:
				\begin{equation}\label{BT2-01}
					\tilde{\bm{p}}_x= 2\tilde{\bm{p}} (\bm{q}\tilde{\bm{p}}+b)-2\bm{p},\quad \bm{q}_x=2\tilde{\bm{q}} - 2(\bm{q}\tilde{\bm{p}}+b)\bm{q}. 
				\end{equation}
			\end{subequations}
			
		\end{enumerate}
	\end{corollary}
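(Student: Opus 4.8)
The plan is to treat the Corollary as a reduction of the first-order systems \eqref{BT-10} and \eqref{BT-01}, exploiting the fact that the auxiliary field $\bm{f}$ (respectively $\bm{g}$) is determined by the genuine fields only up to a central integration constant. I would carry out the case \eqref{eq:T-10} in full detail and then obtain \eqref{eq:T-01} by an obvious symmetry.

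First I would differentiate the combination $\bm{f}-\bm{p}\tilde{\bm{q}}$ and simplify it using \eqref{BT-10}. By the Leibniz rule in $\mathfrak{R}$, $\partial_x(\bm{p}\tilde{\bm{q}})=\bm{p}_x\tilde{\bm{q}}+\bm{p}\tilde{\bm{q}}_x$; substituting $\bm{p}_x=2\tilde{\bm{p}}-2\bm{f}\bm{p}$ and $\tilde{\bm{q}}_x=2\tilde{\bm{q}}\bm{f}-2\bm{q}$ together with $\bm{f}_x=2\tilde{\bm{p}}\tilde{\bm{q}}-2\bm{p}\bm{q}$, the terms $2\tilde{\bm{p}}\tilde{\bm{q}}$ and $2\bm{p}\bm{q}$ cancel and what survives is $\partial_x(\bm{f}-\bm{p}\tilde{\bm{q}})=2\bm{f}\bm{p}\tilde{\bm{q}}-2\bm{p}\tilde{\bm{q}}\bm{f}=2[\bm{f},\bm{p}\tilde{\bm{q}}]$, exactly the relation quoted in the text before the statement. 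Writing $\bm{h}:=\bm{f}-\bm{p}\tilde{\bm{q}}$ and using $[\bm{f},\bm{p}\tilde{\bm{q}}]=[\bm{h},\bm{p}\tilde{\bm{q}}]$, this becomes $\bm{h}_x=2[\bm{h},\bm{p}\tilde{\bm{q}}]$, a linear homogeneous equation for $\bm{h}$ that admits the $x$-independent central solution $\bm{h}=a\in C(\mathfrak{R})$; this is the noncommutative counterpart of the commutative integration constant, and I would adopt it as the defining reduction $\bm{f}=\bm{p}\tilde{\bm{q}}+a$.

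With this choice I would substitute $\bm{f}=\bm{p}\tilde{\bm{q}}+a$ back into \eqref{BT-10}. The last two equations immediately turn into the two equations of \eqref{BT2-10}, while the first one has to be verified to be an identity: recomputing $\partial_x(\bm{p}\tilde{\bm{q}})$ from \eqref{BT2-10} reproduces $2\tilde{\bm{p}}\tilde{\bm{q}}-2\bm{p}\bm{q}$ up to the extra term $2[\bm{p}\tilde{\bm{q}},a]$, which vanishes precisely because $a$ is central. This establishes \eqref{eq:DT2-10-all}. The second case is identical under the involution $\bm{p}\leftrightarrow\bm{q}$, $\tilde{\bm{p}}\leftrightarrow\tilde{\bm{q}}$, $\bm{f}\leftrightarrow\bm{g}$, $a\leftrightarrow b$, which carries \eqref{BT-01} to \eqref{BT-10}; running the same argument yields $\bm{g}=\bm{q}\tilde{\bm{p}}+b$ and the system \eqref{BT2-01}, giving \eqref{eq:DT2-01-all}.

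The only genuinely delicate point is the passage from the commutator equation $\bm{h}_x=2[\bm{h},\bm{p}\tilde{\bm{q}}]$ to the reduction $\bm{h}=a$. In the noncommutative setting this equation also has non-central solutions, namely conjugation-type flows $\bm{h}=G\bm{h}_0G^{-1}$ with $G_x=-2\bm{p}\tilde{\bm{q}}\,G$, so I cannot assert that $\bm{f}=\bm{p}\tilde{\bm{q}}+a$ is forced; rather I would justify it as the consistent minimal choice that specialises correctly to the commutative case and keeps $a$ both constant in $x$ and central, which is exactly what makes the first equation of \eqref{BT-10} collapse to an identity. Respecting the order of the noncommuting factors throughout the two substitutions, so that every cancellation is honest rather than accidental, is where the bookkeeping care is required.
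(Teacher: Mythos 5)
Your argument is exactly the paper's: differentiate $\bm{f}-\bm{p}\tilde{\bm{q}}$ (resp.\ $\bm{g}-\bm{q}\tilde{\bm{p}}$) using \eqref{BT-10} (resp.\ \eqref{BT-01}) to get $\partial_x(\bm{f}-\bm{p}\tilde{\bm{q}})=2[\bm{f},\bm{p}\tilde{\bm{q}}]$, take the central constant solution $\bm{f}=\bm{p}\tilde{\bm{q}}+a$, and check that the first equation becomes an identity while the other two become \eqref{BT2-10}, \eqref{BT2-01}. Your explicit caveat that the commutator equation admits non-central (conjugation-type) solutions, so that $\bm{f}=\bm{p}\tilde{\bm{q}}+a$ is a chosen reduction rather than a forced consequence, is in fact more careful than the paper's wording (``which implies the solution''), and the rest of the bookkeeping is correct.
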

	
	\begin{corollary}
		Relations \eqref{BT2-10} and \eqref{BT2-01} constitute auto-B\"acklund transformations for the noncommutative NLS system \eqref{NLS-eq-NC}.
	\end{corollary}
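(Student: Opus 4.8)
The statement is the defining property of an auto-Bäcklund transformation, so I must show that whenever $(\bm{p},\bm{q})$ solves the noncommutative NLS system \eqref{NLS-eq-NC} and $(\tilde{\bm{p}},\tilde{\bm{q}})$ is related to it through \eqref{BT2-10}, then $(\tilde{\bm{p}},\tilde{\bm{q}})$ again solves \eqref{NLS-eq-NC}, and conversely; the case \eqref{BT2-01} is entirely analogous. The starting point is the observation made just before the corollary: the two relations in \eqref{BT2-10} are exactly the coefficients of $\lambda^{1}$ and $\lambda^{0}$ in the spatial intertwining relation $\partial_x{\rm M}+{\rm M}{\rm U}-\tilde{\rm U}{\rm M}=0$ of \eqref{M-equation} for the Darboux matrix ${\rm M}$ of \eqref{DT2-10}, once $\bm{f}=\bm{p}\tilde{\bm{q}}+a$ has been imposed. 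Thus the problem reduces to squeezing the NLS equation out of the Darboux relations \eqref{M-equation}.

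The plan is the standard zero-curvature manipulation. Write $\mathcal{F}:=\partial_t{\rm U}-\partial_x{\rm V}+{\rm U}{\rm V}-{\rm V}{\rm U}$ and $\tilde{\mathcal{F}}:=\partial_t\tilde{\rm U}-\partial_x\tilde{\rm V}+\tilde{\rm U}\tilde{\rm V}-\tilde{\rm V}\tilde{\rm U}$ for the curvatures built from $(\bm{p},\bm{q})$ and $(\tilde{\bm{p}},\tilde{\bm{q}})$. I differentiate the spatial relation of \eqref{M-equation} with respect to $t$ and the temporal one with respect to $x$, subtract, use $\partial_x\partial_t{\rm M}=\partial_t\partial_x{\rm M}$, and then re-substitute $\partial_x{\rm M}=\tilde{\rm U}{\rm M}-{\rm M}{\rm U}$ and $\partial_t{\rm M}=\tilde{\rm V}{\rm M}-{\rm M}{\rm V}$. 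The mixed terms of the form $\tilde{\rm U}{\rm M}{\rm V}$ and $\tilde{\rm V}{\rm M}{\rm U}$ cancel pairwise and one is left with
\begin{equation*}
	\tilde{\mathcal{F}}\,{\rm M}={\rm M}\,\mathcal{F}.
\end{equation*}
By the Lax-pair equivalence \eqref{eq:cont-Lax-comb-cond}, $\mathcal{F}=0$ precisely when $(\bm{p},\bm{q})$ solves \eqref{NLS-eq-NC} and $\tilde{\mathcal{F}}=0$ precisely when $(\tilde{\bm{p}},\tilde{\bm{q}})$ does. Since ${\rm M}$ is invertible --- its Schur complement relative to the scalar unit entry collapses to $\lambda+a$ (respectively $-\lambda+b$ for ${\rm K}$), a nonzero central element for generic $\lambda$ --- the identity above forces $\mathcal{F}=0\Leftrightarrow\tilde{\mathcal{F}}=0$, which is exactly the auto-Bäcklund property. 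The second case is obtained verbatim with ${\rm K}$ of \eqref{DT2-01} in place of ${\rm M}$ and \eqref{BT2-01} in place of \eqref{BT2-10}, or, more economically, from the symmetry $\bm{p}\leftrightarrow\bm{q}$, $a\leftrightarrow b$ under which the two constructions are exchanged.

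The step I expect to require genuine work is the temporal intertwining relation $\partial_t{\rm M}+{\rm M}{\rm V}-\tilde{\rm V}{\rm M}=0$ itself, because the derivation in the Theorem exploited only the spatial part of \eqref{M-equation}, whereas the cross-differentiation above uses both parts. I would secure it by direct substitution: insert the explicit ${\rm M},{\rm V},\tilde{\rm V}$ into $\partial_t{\rm M}+{\rm M}{\rm V}-\tilde{\rm V}{\rm M}$, expand in powers of $\lambda$, and reduce each coefficient using \eqref{BT2-10} to remove the spatial derivatives $\bm{p}_x,\tilde{\bm{q}}_x$ and the NLS equations \eqref{NLS-eq-NC} to remove the time derivatives introduced by $\partial_t{\rm M}$. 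Equivalently, this is the compatibility of the overdetermined system formed by the spatial relations \eqref{BT2-10} and the NLS time-flow, which holds precisely because $(\bm{p},\bm{q})$ satisfies \eqref{NLS-eq-NC}. Once this temporal relation is in hand, the remaining manipulation is the short formal computation described above.
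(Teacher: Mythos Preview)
Your approach is correct and, in fact, more explicit than the paper's: the corollary is stated without proof, being regarded as an immediate consequence of Definition~\ref{def-Darboux}, the remark following it that system \eqref{Def-Darboux-BT} furnishes an auto-B\"acklund transformation, and the construction of ${\rm M}$ and ${\rm K}$ in the preceding theorem and corollary. Your cross-differentiation identity $\tilde{\mathcal{F}}\,{\rm M}={\rm M}\,\mathcal{F}$ is the standard justification underlying that framework, and you apply it correctly.

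One point deserves sharpening. You propose to \emph{verify} the temporal relation $\partial_t{\rm M}+{\rm M}{\rm V}-\tilde{\rm V}{\rm M}=0$ by substitution, using the NLS equations to eliminate the time derivatives appearing in $\partial_t{\rm M}$. But $\partial_t{\rm M}$ contains $\tilde{\bm{q}}_t$, and invoking \eqref{NLS-eq-NC} for $(\tilde{\bm{p}},\tilde{\bm{q}})$ to supply it would be circular. The cleaner viewpoint, which you gesture at in your closing sentence, is that the temporal intertwining is not an identity to be checked but rather the $t$-part of the full B\"acklund system: it \emph{defines} the evolution $\tilde{\bm{q}}_t$ (and yields $\bm{p}_t$ consistently with \eqref{NLS-eq-NC}). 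The question is then whether the combined spatial--temporal system is compatible, and your identity $\tilde{\mathcal{F}}\,{\rm M}={\rm M}\,\mathcal{F}$ answers this directly: when $\mathcal{F}=0$ the obstruction $\tilde{\mathcal{F}}$ vanishes, so the system is Frobenius-integrable and its solutions $(\tilde{\bm{p}},\tilde{\bm{q}})$ satisfy \eqref{NLS-eq-NC}. With this adjustment the argument is complete, and it supplies precisely the content the paper leaves implicit.
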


	\section{Integrable discretisations of the noncommutative NLS equation}\label{Int-discr}
	
	In this section, we employ the Darboux transformations derived in the previous section to construct noncommutative discrete integrable systems, following the approach outlined in Section \ref{sec:discretisation}. Our starting point is the Darboux matrix \eqref{DT-10}, whose consistency around a square, as depicted in the Bianchi diagram in Figure \ref{bianchi}, leads to various discrete integrable systems. By employing first integrals of these systems, we reduce the number of fields, thereby deriving a noncommutative Adler--Yamilov system. We also consider the degenerate Darboux matrix \eqref{DT-10-deg} and the corresponding discrete system, which yields a noncommutative discrete Toda equation.

	\subsection{Integrable discretisation of the noncommutative NLS equation}
	
	We start with the discrete Lax pair
	$${\cal{S}}(\Psi) = {\rm{M}}(\bm{f},\bm{p},\bm{q}_{10}) \Psi, ~~~ {\cal{T}}(\Psi) = {\rm{M}}(\bm{g},\bm{p},\bm{q}_{01}) \Psi,$$
	where matrix $\rm{M}$ is given in \eqref{DT-10}, and consider its compatibility condition
	\begin{equation}\label{NLS-dis-Lax}
		{\rm{M}}(\bm{f}_{01},\bm{p}_{01},\bm{q}_{11}){\rm{M}}(\bm{g},\bm{p},\bm{q}_{01})={\rm{M}}(\bm{g}_{10},\bm{p}_{10},\bm{q}_{11}){\rm{M}}(\bm{f},\bm{p},\bm{q}_{10}).
	\end{equation}
	Writing out explicitly the compatibility condition and equating the coefficients of the different powers of $\lambda$ to zero, we end up with the following system of equations:
	\begin{subequations}\label{NLS-dis-sys}
		\begin{align}
			& \bm{f}_{01}+\bm{g}=\bm{g}_{10}+\bm{f},\label{NLS-dis-sys-a}\\
			& \bm{f}_{01}\bm{g}+\bm{p}_{01}\bm{q}_{01}=\bm{g}_{10}\bm{f}+\bm{p}_{10}\bm{q}_{10},\label{NLS-dis-sys-b}\\
			& \bm{f}_{01}\bm{p}+\bm{p}_{01}=\bm{g}_{10}\bm{p}+\bm{p}_{10},\label{NLS-dis-sys-c}\\
			& \bm{q}_{11}\bm{g}+\bm{q}_{01}=\bm{q}_{11}\bm{f}+\bm{q}_{10}\label{NLS-dis-sys-d}.
		\end{align}
	\end{subequations}
	We can solve system \eqref{NLS-dis-sys} for $(\bm{f}_{01},\bm{g},\bm{p}_{01},\bm{q}_{01})$. Indeed, from \eqref{NLS-dis-sys-a}, \eqref{NLS-dis-sys-c}, and \eqref{NLS-dis-sys-d}, we obtain 
	$$\bm{f}_{01}=\bm{g}_{10}+\bm{f}-\bm{g}, ~~~ \bm{p}_{01}=\bm{g}_{10}\bm{p}+\bm{p}_{10}-\bm{f}_{01}\bm{p}, ~~~ {\mbox{and}} ~~~ \bm{q}_{01}=\bm{q}_{11}\bm{f}+\bm{q}_{10}-\bm{q}_{11}\bm{g},$$ 
	respectively. Their substitution into \eqref{NLS-dis-sys-b} leads to
	\begin{equation} \label{all-f-g}
		(\bm{g}-\bm{f}) \bm{p} \bm{q}_{11} (\bm{f}-\bm{g}) - (\bm{g}-\bm{f}) (\bm{g}-\bm{p} \bm{q}_{10}) + (\bm{g}_{10}-\bm{p}_{10}\bm{q}_{11}) (\bm{g}-\bm{f})=0 
	\end{equation}
	
	If $\bm{g}=\bm{f}$, then this equation holds identically. Moreover, from \eqref{NLS-dis-sys-a} we obtain $\bm{f}_{01}=\bm{g}_{10}$, and then from \eqref{NLS-dis-sys-c} and \eqref{NLS-dis-sys-d}, respectively, we get $\bm{p}_{01}=\bm{p}_{10}$ and $\bm{q}_{01}=\bm{q}_{10}$, namely the trivial solution.
	
	From now on, we assume that $\bm{g} \ne \bm{f}$ and we consider two inequivalent cases.
	\begin{itemize}
		\item[Case 1.] If $[\bm{g}-\bm{f},\bm{g}-\bm{p} \bm{q}_{10}] =0$, then \eqref{all-f-g} becomes
		$$\left((\bm{f}-\bm{g}) \bm{p} \bm{q}_{11} -  (\bm{g}-\bm{p} \bm{q}_{10}) + (\bm{g}_{10}-\bm{p}_{10}\bm{q}_{11})\right) (\bm{g}-\bm{f})=0.$$
		Assuming that the first factor is zero, we end up with
		\begin{equation}\label{f-g}
			\bm{g} = \left(\bm{f} \bm{p} \bm{q}_{11} + \bm{p} \bm{q}_{10} + \bm{g}_{10}-\bm{p}_{10}\bm{q}_{11}\right)(1+\bm{p} \bm{q}_{11})^{-1}.
		\end{equation}
		
		\item[Case 2.] If $[\bm{g}-\bm{f},\bm{g}_{10}-\bm{p}_{10} \bm{q}_{11}] =0$, then \eqref{all-f-g} becomes
		$$ (\bm{g}-\bm{f}) \left(\bm{p} \bm{q}_{11} (\bm{f}-\bm{g}) - (\bm{g}-\bm{p} \bm{q}_{10}) + (\bm{g}_{10}-\bm{p}_{10}\bm{q}_{11}) \right)=0.$$
		The assumption now that the second factor is identically zero leads to
		\begin{equation}\label{g-f} 
			\bm{g} = (1+ \bm{p} \bm{q}_{11})^{-1} \left(\bm{p} \bm{q}_{11} \bm{f} + \bm{p} \bm{q}_{10} + \bm{g}_{10}-\bm{p}_{10}\bm{q}_{11}\right).
		\end{equation}
		
	\end{itemize}
	
	We can now use equation \eqref{f-g} or \eqref{g-f}, along with \eqref{NLS-dis-sys-a}, \eqref{NLS-dis-sys-c} and \eqref{NLS-dis-sys-d}, to express $\bm{f}_{01}$, $\bm{p}_{01}$ and $\bm{q}_{01}$ in terms of $\bm{f}$, $\bm{g}_{10}$, $\bm{p}$, $\bm{p}_{10}$, $\bm{q}_{10}$, and $\bm{q}_{11}$.
	This analysis can be summarised in the following statement.
	
	\begin{proposition} \label{prop:NLS-2-sys}
		Consider the Lax pair
		$${\cal{S}}(\Psi) = {\rm{M}}(\bm{f},\bm{p},\bm{q}_{10}) \Psi, ~~~ {\cal{T}}(\Psi) = {\rm{M}}(\bm{g},\bm{p},\bm{q}_{01}) \Psi,$$
		where
		$$ {\rm{M}}(\bm{x},\bm{y},\bm{z}) = \lambda \begin{pmatrix} 1 & 0 \\ 0 & 0 \end{pmatrix} + 
		\begin{pmatrix} {\bm{x}} & \bm{y} \\ \bm{z} & 1 \end{pmatrix}.$$
		If $[\bm{g}-\bm{f},\bm{g}-\bm{p} \bm{q}_{10}] =0$, then its compatibility condition leads to the system of P$\Delta$Es
		\begin{subequations}\label{NLS-dis1}
			\begin{align}
				& \bm{f}_{01}=\left( \bm{f} - \bm{p} \bm{q}_{10} + (\bm{g}_{10} \bm{p}+ \bm{p}_{10}) \bm{q}_{11} \right) (1+\bm{p} \bm{q}_{11})^{-1},\label{NLS-dis1-a}\\
				& \bm{g}= \left(\bm{f} \bm{p} \bm{q}_{11} + \bm{p} \bm{q}_{10} + \bm{g}_{10}-\bm{p}_{10}\bm{q}_{11}\right)(1+\bm{p} \bm{q}_{11})^{-1},\label{NLS-dis1-b}\\
				& \bm{p}_{01}=(\bm{p} \bm{q}_{10} \bm{p} + (\bm{g}_{10}-\bm{f}) \bm{p} + \bm{p}_{10}) (1+\bm{q}_{11} \bm{p})^{-1},\label{NLS-dis1-c}\\
				& \bm{q}_{01}= (\bm{q}_{11} \bm{p}_{10} \bm{q}_{11} + \bm{q}_{11} (\bm{f} -\bm{g}_{10}) + \bm{q}_{10} + \bm{q}_{10} \bm{p} \bm{q}_{11}-  \bm{q}_{11} \bm{p} \bm{q}_{10}  ) (1+ \bm{p}\bm{q}_{11})^{-1},\label{NLS-dis1-d}.
			\end{align}
		\end{subequations}
		Assuming that $[\bm{g}-\bm{f},\bm{g}_{10}-\bm{p}_{10} \bm{q}_{11}] =0$, then the compatibility condition yields
		\begin{subequations}\label{NLS-dis}
			\begin{align}
				& \bm{f}_{01}=(1+\bm{p}\bm{q}_{11})^{-1}( \bm{f} - \bm{p} \bm{q}_{10} + \bm{p} \bm{q}_{11} \bm{g}_{10} + \bm{p}_{10} \bm{q}_{11}),\label{NLS-dis-a}\\
				& \bm{g}=(1+\bm{p}\bm{q}_{11})^{-1}(\bm{g}_{10}+\bm{p}\bm{q}_{10}-\bm{p}_{10}\bm{q}_{11}+\bm{p}\bm{q}_{11}\bm{f}),\label{NLS-dis-b}\\
				& \bm{p}_{01}=(1+\bm{p}\bm{q}_{11})^{-1}( \bm{p} \bm{q}_{10} \bm{p} + (\bm{g}_{10} -\bm{f}) \bm{p} + \bm{p}_{10} + \bm{p} \bm{q}_{11} \bm{p}_{10} - \bm{p}_{10} \bm{q}_{11} \bm{p}),\label{NLS-dis-c}\\
				& \bm{q}_{01}=(1+\bm{q}_{11} \bm{p})^{-1}( \bm{q}_{11} \bm{p}_{10}\bm{q}_{11} + \bm{q}_{11} (\bm{f}-\bm{g}_{10}) + \bm{q}_{10}).\label{NLS-dis-d}
			\end{align}
		\end{subequations}
		Both systems \eqref{NLS-dis1} and \eqref{NLS-dis} are integrable and admit the conservation law 
		\begin{equation}\label{conservation_law}
			(\mathcal{T}-1)\bm{f}=(\mathcal{S}-1)\bm{g}.
		\end{equation} 
	\end{proposition}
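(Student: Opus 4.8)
The plan is to build directly on the reduction that has already been carried out before the statement, so that the proof becomes essentially a bookkeeping exercise in back-substitution. The matrix identity \eqref{NLS-dis-Lax} has been expanded into the four $\mathfrak{R}$-valued equations \eqref{NLS-dis-sys-a}--\eqref{NLS-dis-sys-d}, and \eqref{NLS-dis-sys-a}, \eqref{NLS-dis-sys-c}, \eqref{NLS-dis-sys-d} have been solved linearly for $\bm{f}_{01}$, $\bm{p}_{01}$, $\bm{q}_{01}$ in terms of $\bm{g}$ and the shifted fields, leaving \eqref{all-f-g} as the single remaining constraint. Under the hypothesis $[\bm{g}-\bm{f},\bm{g}-\bm{p}\bm{q}_{10}]=0$ a factor of $\bm{g}-\bm{f}$ can be cancelled on the right and this constraint collapses to the closed expression \eqref{f-g} for $\bm{g}$, which is exactly \eqref{NLS-dis1-b}; Case 2 is identical after replacing the hypothesis by $[\bm{g}-\bm{f},\bm{g}_{10}-\bm{p}_{10}\bm{q}_{11}]=0$ and using \eqref{g-f}. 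All that remains is to feed \eqref{f-g} (respectively \eqref{g-f}) back into the three linear relations and simplify.

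For Case 1 I would first rewrite \eqref{f-g} as $\bm{g}-\bm{f}=(\bm{p}\bm{q}_{10}+\bm{g}_{10}-\bm{p}_{10}\bm{q}_{11}-\bm{f})(1+\bm{p}\bm{q}_{11})^{-1}$ by subtracting $\bm{f}$ and regrouping, and note that $\bm{f}_{01}=\bm{g}_{10}+\bm{f}-\bm{g}$ then reproduces \eqref{NLS-dis1-a}. The two nontrivial relations reduce to $\bm{p}_{01}=\bm{p}_{10}+(\bm{g}-\bm{f})\bm{p}$ and $\bm{q}_{01}=\bm{q}_{10}-\bm{q}_{11}(\bm{g}-\bm{f})$. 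The key computational device here is the push-through identity $(1+\bm{p}\bm{q}_{11})^{-1}\bm{p}=\bm{p}(1+\bm{q}_{11}\bm{p})^{-1}$, together with its mirror form $\bm{q}_{11}(1+\bm{p}\bm{q}_{11})^{-1}=(1+\bm{q}_{11}\bm{p})^{-1}\bm{q}_{11}$. To get \eqref{NLS-dis1-c} I would substitute the rewritten $\bm{g}-\bm{f}$ into the expression for $\bm{p}_{01}$, move $\bm{p}$ through the inverse, and absorb the leading $\bm{p}_{10}$ by writing $\bm{p}_{10}=\bm{p}_{10}(1+\bm{q}_{11}\bm{p})(1+\bm{q}_{11}\bm{p})^{-1}$; the $\bm{p}_{10}\bm{q}_{11}\bm{p}$ terms cancel and leave exactly \eqref{NLS-dis1-c}. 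The derivation of \eqref{NLS-dis1-d} is completely parallel, clearing the common right inverse $(1+\bm{p}\bm{q}_{11})^{-1}$. Case 2 runs the same way, except that \eqref{g-f} carries the inverse on the left, so $(1+\bm{p}\bm{q}_{11})^{-1}$ survives on the left in \eqref{NLS-dis-c} while the mirror push-through converts it to $(1+\bm{q}_{11}\bm{p})^{-1}$ in \eqref{NLS-dis-d}.

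The conservation law is immediate and needs no separate computation: equation \eqref{NLS-dis-sys-a}, which holds in both cases, reads $\bm{f}_{01}-\bm{f}=\bm{g}_{10}-\bm{g}$, that is $(\mathcal{T}-1)\bm{f}=(\mathcal{S}-1)\bm{g}$. Integrability likewise requires nothing to be proved from scratch: by construction each of \eqref{NLS-dis1} and \eqref{NLS-dis} is precisely the compatibility condition \eqref{NLS-dis-Lax} of the discrete Lax pair built from the Darboux matrix \eqref{DT-10}, so each system possesses a Lax representation and is therefore integrable in the sense adopted in the text (existence of a Lax pair, cf. \eqref{Lax-rep}).

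I expect the only real obstacle to be the noncommutative bookkeeping. Every cancellation above depends on the exact left/right placement of the factors and of the inverse $(1+\bm{p}\bm{q}_{11})^{-1}$ versus $(1+\bm{q}_{11}\bm{p})^{-1}$, so the push-through identity has to be invoked at precisely the right moment in each of the four right-hand sides, and the two cases must be kept carefully apart since they differ only in whether the inverse sits on the left or the right. One must also record that the whole construction is valid on the open set of $\mathfrak{R}$ where $1+\bm{p}\bm{q}_{11}$ (equivalently $1+\bm{q}_{11}\bm{p}$) is invertible, which is the genericity assumption implicit throughout.
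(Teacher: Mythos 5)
Your proposal is correct and follows essentially the same route as the paper: the paper's own derivation is the discussion preceding the proposition (expanding \eqref{NLS-dis-Lax} into \eqref{NLS-dis-sys}, isolating the constraint \eqref{all-f-g}, splitting into the two commutation cases to get \eqref{f-g} and \eqref{g-f}, and back-substituting into \eqref{NLS-dis-sys-a}, \eqref{NLS-dis-sys-c}, \eqref{NLS-dis-sys-d}), with the conservation law read off from \eqref{NLS-dis-sys-a} and integrability following from the Lax representation. Your explicit use of the push-through identity $\bm{x}(1+\bm{y}\bm{x})^{-1}=(1+\bm{x}\bm{y})^{-1}\bm{x}$ and the invertibility caveat merely spell out details the paper leaves implicit.
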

	
	\begin{remark}
		Systems \eqref{NLS-dis1}, \eqref{NLS-dis} are integrable discretisations of the NLS system \eqref{NLS-eq-NC}. Moreover, they may be considered as noncommutative versions of system (18) in \cite{SPS}, because they reduce to the latter system if all variables commute with each other. 
	\end{remark}

	\begin{remark}
		One can consider the initial value problem on the staircase (see Figure \ref{fig-ivp}). Then, system \eqref{NLS-dis} determines the evolution in the NE direction of the staircase. Moreover, system \eqref{NLS-dis-sys} can also be solved for $(\bm{f}_{01},\bm{g},\bm{p}_{01},\bm{q}_{01})$, and this solution determines the evolution in the SW direction of the staircase.
	\end{remark}
	
	\begin{figure}[ht!]
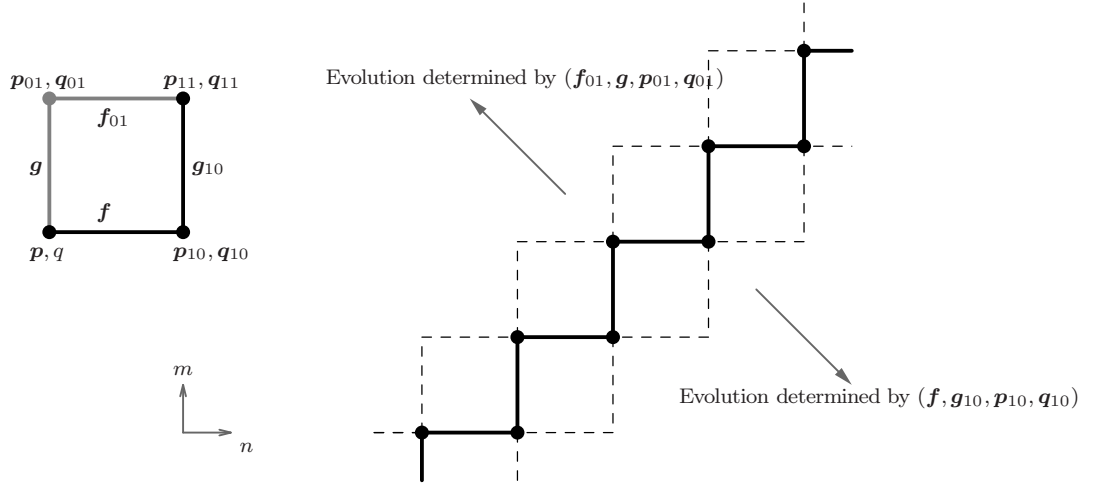

		\centertexdraw{
			\setunitscale 0.5
			\move(-4.5 -2) \linewd 0.02 \setgray 0.4 \arrowheadtype t:V \arrowheadsize l:.12 w:.06 \avec(-4.5 -1.5) 
			\move(-4.5 -2) \arrowheadtype t:V  \avec(-4 -2)
			\arrowheadsize l:.20 w:.10
			\move(-.5 .5) \linewd 0.02 \setgray 0.4 \arrowheadtype t:F \avec(-1.5 1.5) 
			\move(1.5 -.5) \linewd 0.02 \setgray 0.4 \arrowheadtype t:F \avec(2.5 -1.5) 
			\setgray 0.5
			\linewd 0.04 \move (-4.5 1.5)  \lvec (-5.9 1.5) \lvec (-5.9 .1)
			\move (-5.9 1.5) \fcir f:0.5 r:0.075
			\htext (-6.1 .7) {\scriptsize{$\bm{g}$}}
			\htext (-4.4 .7) {\scriptsize{$\bm{g}_{10}$}}
			\setgray 0.0
			\linewd 0.04 \move (-2 -2.5) \lvec (-2 -2) \lvec (-1 -2) \lvec (-1 -1) \lvec (0 -1) \lvec (0 0) \lvec (1 0) \lvec(1 1) \lvec (2 1) \lvec(2 2) \lvec(2.5 2)
			\move (-5.9 .1) \lvec (-4.5 .1) \lvec (-4.5 1.5)  
			\linewd 0.015 \lpatt (.1 .1 ) \move (-2 -2) \lvec (-2 -1) \lvec(-1 -1) \lvec (-1 0) \lvec (0 0) \lvec (0 1) \lvec(1 1) \lvec (1 2) \lvec (2 2) \lvec (2 2.5)
			\move(-2.5 -2) \lvec(-2 -2) \move(-2.5 -2) \lvec(-2 -2)
			\move (-1 -2.5) \lvec (-1 -2) \lvec(0 -2) \lvec(0 -1) \lvec(1 -1) \lvec(1 0) \lvec(2 0) \lvec(2 1) \lvec(2.5 1)
			\move (-2 -2) \fcir f:0.0 r:0.075 \move (-1 -2) \fcir f:0.0 r:0.075
			\move (-1 -1) \fcir f:0.0 r:0.075 \move (0 -1) \fcir f:0.0 r:0.075
			\move (0 0) \fcir f:0.0 r:0.075 \move (1 0) \fcir f:0.0 r:0.075  
			\move (1 1) \fcir f:0.0 r:0.075 \move (2 1) \fcir f:0.0 r:0.075
			\move (2 2) \fcir f:0.0 r:0.075
			\move (-5.9 .1) \fcir f:0.0 r:0.075 \move (-4.5 .1) \fcir f:0.0 r:0.075 \move (-4.5 1.5) \fcir f:0.0 r:0.075
			\htext (-3.9 -2.2) {\scriptsize{$n$}}
			\htext (-4.6 -1.4) {\scriptsize{$m$}}
			\htext (-6.1 -.2) {\scriptsize{$\bm{p},q$}}
			\htext (-4.6 -.2) {\scriptsize{$\bm{p}_{10},\bm{q}_{10}$}}
			\htext (-5.4 0.2) {\scriptsize{$\bm{f}$}}
			\htext (-5.4 1.2) {\scriptsize{$\bm{f}_{01}$}}
			\htext (-6.3 1.6) {\scriptsize{$\bm{p}_{01},\bm{q}_{01}$}}
			\htext (-4.7 1.6) {\scriptsize{$\bm{p}_{11},\bm{q}_{11}$}}
			\htext (-3 1.6) {{\scriptsize{Evolution determined by $(\bm{f}_{01},\bm{g},\bm{p}_{01},\bm{q}_{01})$}}}
			\htext (.7 -1.75) {{\scriptsize{Evolution determined by $(\bm{f},\bm{g}_{10},\bm{p}_{10},\bm{q}_{10})$}}}
		}
		\caption{{Initial value problem and direction of evolution.}} \label{fig-ivp}
	\end{figure}
	
	System \eqref{NLS-dis1} admits a first integral, and a second one exists under certain commutation conditions. Similar considerations apply to system \eqref{NLS-dis}. These results are summarised in the following two statements.
	
	\begin{proposition}\label{1st-ints-1} 
		System \eqref{NLS-dis1} admits the first integral
		\begin{equation}\label{first-ints-1a}
			({\cal{T}}-1)\left(\bm{f}-\bm{p}\bm{q}_{10}\right) = 0.
		\end{equation}
		If additionally $[\bm{f},\bm{p} \bm{q}_{11}] = [\bm{g},\bm{p} \bm{q}_{11}] =0$, then it admits one more first integral, namely
		\begin{equation}\label{first-ints-1b}
			({\cal{S}}-1)\left(\bm{g}-\bm{p}\bm{q}_{01}\right) = 0.
		\end{equation}
	\end{proposition}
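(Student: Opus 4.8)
The plan is to verify each claimed first integral by direct substitution from the evolution equations \eqref{NLS-dis1}, after unpacking the shift operators. Writing $I:=\bm{f}-\bm{p}\bm{q}_{10}$ and $J:=\bm{g}-\bm{p}\bm{q}_{01}$, the assertions \eqref{first-ints-1a} and \eqref{first-ints-1b} read $\mathcal{T}(I)=I$ and $\mathcal{S}(J)=J$, i.e. $\bm{f}_{01}-\bm{p}_{01}\bm{q}_{11}=\bm{f}-\bm{p}\bm{q}_{10}$ and $\bm{g}_{10}-\bm{p}_{10}\bm{q}_{11}=\bm{g}-\bm{p}\bm{q}_{01}$, since $\mathcal{T}(\bm{p}\bm{q}_{10})=\bm{p}_{01}\bm{q}_{11}$ and $\mathcal{S}(\bm{p}\bm{q}_{01})=\bm{p}_{10}\bm{q}_{11}$. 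Throughout I lean on the push-through identity $\bm{q}_{11}(1+\bm{p}\bm{q}_{11})^{-1}=(1+\bm{q}_{11}\bm{p})^{-1}\bm{q}_{11}$, which is the device that reconciles the left inverse $(1+\bm{p}\bm{q}_{11})^{-1}$ appearing in \eqref{NLS-dis1-a}, \eqref{NLS-dis1-b}, \eqref{NLS-dis1-d} with the right inverse $(1+\bm{q}_{11}\bm{p})^{-1}$ appearing in \eqref{NLS-dis1-c}.

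For \eqref{first-ints-1a} I would substitute \eqref{NLS-dis1-a} for $\bm{f}_{01}$ and \eqref{NLS-dis1-c} for $\bm{p}_{01}$ into $\bm{f}_{01}-\bm{p}_{01}\bm{q}_{11}$. Using the push-through identity to move $\bm{q}_{11}$ across the inverse in the $\bm{p}_{01}\bm{q}_{11}$ term, both contributions acquire a common right factor $(1+\bm{p}\bm{q}_{11})^{-1}$, and the bracketed numerator collapses: the terms $\bm{g}_{10}\bm{p}\bm{q}_{11}$ and $\bm{p}_{10}\bm{q}_{11}$ cancel, leaving $(\bm{f}-\bm{p}\bm{q}_{10})(1+\bm{p}\bm{q}_{11})$. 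The trailing inverse then cancels and yields $\bm{f}-\bm{p}\bm{q}_{10}$ exactly. This step requires no commutativity assumptions, so \eqref{first-ints-1a} holds unconditionally. Equivalently, one may use the raw compatibility relation \eqref{NLS-dis-sys-c} to write $\bm{p}_{01}=\bm{g}_{10}\bm{p}+\bm{p}_{10}-\bm{f}_{01}\bm{p}$, which turns the computation into a one-line cancellation against \eqref{NLS-dis1-a}.

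For the conditional integral \eqref{first-ints-1b} I would substitute \eqref{NLS-dis1-d} for $\bm{q}_{01}$ and \eqref{NLS-dis1-b} for $\bm{g}$ into $\bm{g}-\bm{p}\bm{q}_{01}$ and subtract $\bm{g}_{10}-\bm{p}_{10}\bm{q}_{11}$. After again clearing the common inverse, the numerator does not vanish identically; instead it reduces to a sum of commutators, all of the shape $[\bm{p}\bm{q}_{11},\,\cdot\,]$. The decisive simplification is to collect these into the single commutator $[\bm{p}\bm{q}_{11},\,\bm{g}_{10}+\bm{p}\bm{q}_{10}-\bm{p}_{10}\bm{q}_{11}]$ and then use \eqref{NLS-dis1-b} to replace $\bm{g}_{10}+\bm{p}\bm{q}_{10}-\bm{p}_{10}\bm{q}_{11}$ by $\bm{g}+\bm{g}\bm{p}\bm{q}_{11}-\bm{f}\bm{p}\bm{q}_{11}$. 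Every remaining commutator is then either $[\bm{p}\bm{q}_{11},\bm{g}]$ or of the form $[\bm{p}\bm{q}_{11},\bm{h}\,\bm{p}\bm{q}_{11}]=[\bm{p}\bm{q}_{11},\bm{h}]\,\bm{p}\bm{q}_{11}$ with $\bm{h}\in\{\bm{f},\bm{g}\}$, so the hypotheses $[\bm{f},\bm{p}\bm{q}_{11}]=[\bm{g},\bm{p}\bm{q}_{11}]=0$ make the whole residual vanish. This is where the two commutation conditions are genuinely used, and checking that the leftover commutators organise themselves exactly into these combinations is the main obstacle. A cleaner route avoiding most of the bookkeeping is to start instead from the compatibility relation \eqref{NLS-dis-sys-d}, giving $\bm{p}\bm{q}_{01}=\bm{p}\bm{q}_{11}\bm{f}+\bm{p}\bm{q}_{10}-\bm{p}\bm{q}_{11}\bm{g}$, combine it with \eqref{NLS-dis1-b}, and use $[\bm{g},\bm{p}\bm{q}_{11}]=0$ to turn $(1+\bm{p}\bm{q}_{11})\bm{g}$ into $\bm{g}(1+\bm{p}\bm{q}_{11})$; this leaves precisely $[\bm{f},\bm{p}\bm{q}_{11}]+\bm{g}_{10}-\bm{p}_{10}\bm{q}_{11}$, which equals $\bm{g}_{10}-\bm{p}_{10}\bm{q}_{11}$ by the first hypothesis, completing the proof.
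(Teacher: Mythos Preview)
Your proposal is correct and, in its ``cleaner route'' alternatives, matches the paper's proof almost exactly: the paper rearranges \eqref{NLS-dis1-a} into $\bm{f}_{01}-\bm{p}_{01}\bm{q}_{11}=\bm{f}-\bm{p}\bm{q}_{10}+(-\bm{p}_{01}-\bm{f}_{01}\bm{p}+\bm{g}_{10}\bm{p}+\bm{p}_{10})\bm{q}_{11}$ and kills the last bracket via \eqref{NLS-dis-sys-c}, and for the second integral rearranges \eqref{NLS-dis1-b} into $\bm{g}-\bm{p}\bm{q}_{01}=\bm{g}_{10}-\bm{p}_{10}\bm{q}_{11}+\bm{p}(\bm{q}_{10}-\bm{q}_{01}+\bm{q}_{11}\bm{f}-\bm{q}_{11}\bm{g})$ after invoking the two commutators, then kills the bracket via \eqref{NLS-dis-sys-d}. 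Your primary approach via direct substitution of \eqref{NLS-dis1-c}, \eqref{NLS-dis1-d} and the push-through identity is a valid and equivalent variant that simply trades the compatibility shortcut for a slightly longer but self-contained cancellation.
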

	
	\begin{proof}
		We start with equation \eqref{NLS-dis1-a} in the form
		$$\bm{f}_{01}  (1+\bm{p} \bm{q}_{11}) = \bm{f} - \bm{p} \bm{q}_{10} + (\bm{g}_{10} \bm{p}+ \bm{p}_{10}) \bm{q}_{11},$$
		which can be arranged and written as
		$$\bm{f}_{01}- \bm{p}_{01} \bm{q}_{11} = \bm{f} - \bm{p} \bm{q}_{10} + \left(-\bm{p}_{01} -\bm{f}_{01} \bm{p} + \bm{g}_{10} \bm{p}+ \bm{p}_{10}\right) \bm{q}_{11}.$$
		But the last term on the right hand side is identically zero because of equation \eqref{NLS-dis-sys-c}, which proves that $\bm{f}-\bm{p}\bm{q}_{10}$ is a first integral.
		
		Starting with equation \eqref{NLS-dis1-b} and working is the same way, we end up with
		$$\bm{g} -\bm{p} \bm{q}_{01} = \bm{g}_{10} - \bm{p}_{10} \bm{q}_{11} + \bm{p}\bm{q}_{10} -\bm{p}\bm{q}_{01}+ \bm{f} \bm{p} \bm{q}_{11} - \bm{g} \bm{p} \bm{q}_{11}.$$
		The assumption that $[\bm{f},\bm{p} \bm{q}_{11}] = [\bm{g},\bm{p} \bm{q}_{11}] =0$ allows us to write it as
		$$\bm{g} -\bm{p} \bm{q}_{01} = \bm{g}_{10} - \bm{p}_{10} \bm{q}_{11} + \bm{p}\left(\bm{q}_{10} -\bm{q}_{01}+  \bm{q}_{11} \bm{f} -  \bm{q}_{11}\bm{g} \right).$$
		But the last term on the right hand side is identically zero because of equation \eqref{NLS-dis-sys-d}, which proves that $\bm{g}-\bm{p}\bm{q}_{01}$ is another first integral under the assumption that $[\bm{f},\bm{p} \bm{q}_{11}] = [\bm{g},\bm{p} \bm{q}_{11}] =0$.
	\end{proof}
	
	In a similar fashion, we can prove the following.
	
	\begin{proposition}\label{1st-ints-2} 
		System \eqref{NLS-dis} admits the first integral
		\begin{equation}\label{first-ints-2a}
			({\cal{S}}-1)\left(\bm{g}-\bm{p}\bm{q}_{01}\right) = 0.
		\end{equation}
		If additionally $[\bm{f}_{01},\bm{p} \bm{q}_{11}] = [\bm{g}_{10},\bm{p} \bm{q}_{11}] =0$, then it admits one more first integral, namely
		\begin{equation}\label{first-ints-2b}
			({\cal{T}}-1)\left(\bm{f}-\bm{p}\bm{q}_{10}\right) = 0.
		\end{equation}
	\end{proposition}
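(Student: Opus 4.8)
The plan is to follow the same strategy as in the proof of Proposition \ref{1st-ints-1}, reading each first integral as a concrete equality between shifted quantities and extracting the required cancellation from the raw compatibility relations \eqref{NLS-dis-sys} rather than from the solved system \eqref{NLS-dis} alone. Since $\mathcal{S}(\bm{p}\bm{q}_{01})=\bm{p}_{10}\bm{q}_{11}$ and $\mathcal{T}(\bm{p}\bm{q}_{10})=\bm{p}_{01}\bm{q}_{11}$, the claim \eqref{first-ints-2a} amounts to $\bm{g}-\bm{p}\bm{q}_{01}=\bm{g}_{10}-\bm{p}_{10}\bm{q}_{11}$, and \eqref{first-ints-2b} to $\bm{f}-\bm{p}\bm{q}_{10}=\bm{f}_{01}-\bm{p}_{01}\bm{q}_{11}$. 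Compared with Proposition \ref{1st-ints-1}, the roles of the two integrals are interchanged: here it is the $\bm{g}$-integral that holds unconditionally, the reason being that in \eqref{NLS-dis} the inverse factor $(1+\bm{p}\bm{q}_{11})^{-1}$ sits on the left.

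For the unconditional integral \eqref{first-ints-2a} I would clear the inverse in \eqref{NLS-dis-b}, obtaining $(1+\bm{p}\bm{q}_{11})\bm{g}=\bm{g}_{10}+\bm{p}\bm{q}_{10}-\bm{p}_{10}\bm{q}_{11}+\bm{p}\bm{q}_{11}\bm{f}$, and then rewrite it in the form
\[
\bm{g}-\bm{p}\bm{q}_{01}=\bm{g}_{10}-\bm{p}_{10}\bm{q}_{11}+\bm{p}\left(\bm{q}_{10}+\bm{q}_{11}\bm{f}-\bm{q}_{11}\bm{g}-\bm{q}_{01}\right).
\]
The decisive point is that the correction term is multiplied by $\bm{p}$ on the left, so $\bm{p}$ plays no role and no commutation is required: the bracket is exactly \eqref{NLS-dis-sys-d} and therefore vanishes identically, which gives \eqref{first-ints-2a}.

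For the conditional integral \eqref{first-ints-2b} I would start from \eqref{NLS-dis-a} in the cleared form $(1+\bm{p}\bm{q}_{11})\bm{f}_{01}=\bm{f}-\bm{p}\bm{q}_{10}+\bm{p}\bm{q}_{11}\bm{g}_{10}+\bm{p}_{10}\bm{q}_{11}$ and bring it to
\[
\bm{f}_{01}-\bm{p}_{01}\bm{q}_{11}=\bm{f}-\bm{p}\bm{q}_{10}+\left(\bm{p}\bm{q}_{11}\bm{g}_{10}+\bm{p}_{10}\bm{q}_{11}-\bm{p}\bm{q}_{11}\bm{f}_{01}-\bm{p}_{01}\bm{q}_{11}\right).
\]
Unlike in the first part, $\bm{q}_{11}$ now sits on the left in two terms and on the right in the others, so it cannot be factored out. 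I would eliminate $\bm{p}_{01}$ using \eqref{NLS-dis-sys-c} in the form $\bm{p}_{01}=\bm{g}_{10}\bm{p}+\bm{p}_{10}-\bm{f}_{01}\bm{p}$; after substitution the terms $\bm{p}_{10}\bm{q}_{11}$ cancel and the correction collapses to $[\bm{f}_{01},\bm{p}\bm{q}_{11}]-[\bm{g}_{10},\bm{p}\bm{q}_{11}]$.

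The main obstacle is precisely this last reorganisation: recognising that the residual is not identically zero but equals the difference of the two commutators appearing in the hypothesis. Once this is seen, the assumption $[\bm{f}_{01},\bm{p}\bm{q}_{11}]=[\bm{g}_{10},\bm{p}\bm{q}_{11}]=0$ annihilates both terms and establishes \eqref{first-ints-2b}; in fact the weaker requirement $[\bm{f}_{01}-\bm{g}_{10},\bm{p}\bm{q}_{11}]=0$ already suffices.
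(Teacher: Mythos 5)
Your proposal is correct and follows exactly the strategy the paper intends: the paper omits this proof with the remark ``In a similar fashion, we can prove the following,'' and your argument is precisely the analogue of the proof of Proposition~\ref{1st-ints-1}, clearing the inverse in \eqref{NLS-dis-b} and \eqref{NLS-dis-a} and cancelling the residual via \eqref{NLS-dis-sys-d} and \eqref{NLS-dis-sys-c}, with the leftover in the conditional case correctly identified as $[\bm{f}_{01},\bm{p}\bm{q}_{11}]-[\bm{g}_{10},\bm{p}\bm{q}_{11}]$. Your observation that the weaker condition $[\bm{f}_{01}-\bm{g}_{10},\bm{p}\bm{q}_{11}]=0$ suffices is a valid small refinement.
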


	\subsection{A noncommutative Adler--Yamilov type of system}
	
	Using the first integrals \eqref{first-ints-1a} and \eqref{first-ints-2a} we can reduce both systems \eqref{NLS-dis1} and \eqref{NLS-dis} to the same integrable system of P$\Delta$Es. The first integrals \eqref{first-ints-1a} and \eqref{first-ints-2a} imply that $\bm{f}-\bm{p}\bm{q}_{10}=a(n)$ and $\bm{g}-\bm{p}\bm{q}_{01}=b(m)$, respectively, with $\alpha(n), \beta(m) \in C({\mathfrak{R}})$. In view of these choices, the commutation relations in Proposition \ref{prop:NLS-2-sys}, i.e., $[\bm{g}-\bm{f},\bm{g}-\bm{p} \bm{q}_{10}] =0$, $[\bm{g}-\bm{f},\bm{g}_{10}-\bm{p}_{10} \bm{q}_{11}] =0$, hold identically. Moreover, equations \eqref{NLS-dis1-c} and \eqref{NLS-dis-c} reduce to
	$$\bm{p}_{01} =\bm{p}_{10}-\left(a(n)-b(m)\right)(1+\bm{p}\bm{q}_{11})^{-1}\bm{p} =\bm{p}_{10}-\left(a(n)- b(m)\right)\bm{p} (1+\bm{q}_{11}\bm{p})^{-1},$$
	and equations \eqref{NLS-dis1-d} and \eqref{NLS-dis-d} become
	$$\bm{q}_{01} =\bm{q}_{10}+\left(a(n)-b(m)\right)\bm{q}_{11} (1+\bm{p}\bm{q}_{11})^{-1} =\bm{q}_{10}+\left(a(n)- b(m)\right)(1+\bm{q}_{11}\bm{p})^{-1}\bm{q}_{11} .$$
	It should be noted that these relations hold in view of the identity $\bm{x} (1+\bm{y} \bm{x})^{-1} = (1+\bm{x} \bm{y})^{-1} \bm{x}$ which holds for any $\bm{x},\bm{y} \in {\mathfrak{R}}$. Finally, our choices for $\bm{f}$ and $\bm{g}$, along with the above equations, turn the remaining equations of systems \eqref{NLS-dis1} and \eqref{NLS-dis} into identities. We can summarise this analysis in the following statement.
	
	\begin{theorem}
		Let $a(n), b(m)\in C(\mathfrak{R})$. The noncommutative discrete system
		\begin{subequations}\label{nc-Adler-Yamilov-r}
			\begin{align}
				\bm{p}_{01}&=\bm{p}_{10}-\left(a(n)- b(m)\right)(1+\bm{p}\bm{q}_{11})^{-1}\bm{p} =\bm{p}_{10}-\left(a(n)- b(m)\right)\bm{p} (1+\bm{q}_{11}\bm{p})^{-1},\label{nc-Adler-Yamilov-r-a}\\
				\bm{q}_{01}&=\bm{q}_{10}+\left(a(n)- b(m)\right)\bm{q}_{11} (1+\bm{p}\bm{q}_{11})^{-1} =\bm{q}_{10}+\left(a(n)- b(m)\right)(1+\bm{q}_{11}\bm{p})^{-1}\bm{q}_{11},\label{nc-Adler-Yamilov-r-b}
			\end{align}
		\end{subequations}
		is integrable with Lax representation
		\begin{subequations} \label{nc-Adler-Yamilov-Lax}
			\begin{equation}
				{\rm M}(\bm{p}_{01},\bm{q}_{11};a(n)){\rm M}(\bm{p},\bm{q}_{01};b(m))={\rm M}(\bm{p}_{10},\bm{q}_{11};b(m)){\rm M}(\bm{p},\bm{q}_{10};a(n)),
			\end{equation}
			where 
			\begin{equation} \label{eq:matM-AY}
				{\rm M}(\bm{x},\bm{y};a) =\lambda\begin{pmatrix} 
					1 & 0 \\
					0 & 0
				\end{pmatrix}
				+
				\begin{pmatrix}
					a+\bm{x}\bm{y} & \bm{x} \\
					\bm{y} & 1
				\end{pmatrix}.
			\end{equation}
		\end{subequations}
	\end{theorem}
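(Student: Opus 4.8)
The plan is to verify the claimed Lax representation directly and then observe that the integrability and the derivation of system \eqref{nc-Adler-Yamilov-r} are already contained in the preceding analysis. The statement really asserts two things: first, that the system \eqref{nc-Adler-Yamilov-r} follows from the reduction described immediately before it, and second, that the matrix identity \eqref{nc-Adler-Yamilov-Lax} is a genuine Lax representation, i.e. its compatibility around the square is equivalent to \eqref{nc-Adler-Yamilov-r}. The first part is essentially done: setting $\bm{f}=\bm{p}\bm{q}_{10}+a(n)$ and $\bm{g}=\bm{p}\bm{q}_{01}+b(m)$ with $a(n),b(m)\in C(\mathfrak{R})$ turns the commutation hypotheses of Proposition \ref{prop:NLS-2-sys} into identities and reduces \eqref{NLS-dis1-c}--\eqref{NLS-dis1-d} (equivalently \eqref{NLS-dis-c}--\eqref{NLS-dis-d}) to precisely \eqref{nc-Adler-Yamilov-r-a}--\eqref{nc-Adler-Yamilov-r-b}, so I would simply invoke that reduction and cite the identity $\bm{x}(1+\bm{y}\bm{x})^{-1}=(1+\bm{x}\bm{y})^{-1}\bm{x}$ to pass between the two equivalent forms in each line.

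For the Lax representation, I would substitute the explicit form \eqref{eq:matM-AY} into the matrix equation \eqref{nc-Adler-Yamilov-Lax}, expand both sides as polynomials in $\lambda$, and equate the coefficients of $\lambda^2$, $\lambda^1$, and $\lambda^0$. The key point is that this matrix $\mathrm{M}(\bm{x},\bm{y};a)$ is exactly the Darboux matrix \eqref{DT-10} after the corollary's simplification $\bm{f}=\bm{p}\tilde{\bm{q}}+a$, now read as a function of the fields with the reduced first integrals built in. Consequently the coefficient equations are the same system \eqref{NLS-dis-sys} already computed, specialised to $\bm{f}=a(n)+\bm{p}\bm{q}_{10}$ and $\bm{g}=b(m)+\bm{p}\bm{q}_{01}$. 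The $\lambda^2$ term vanishes identically since both leading matrices are the constant rank-one projector; the $\lambda^1$ and $\lambda^0$ terms reproduce the four scalar relations, of which \eqref{NLS-dis-sys-a}, \eqref{NLS-dis-sys-b} collapse to identities under the centrality of $a(n),b(m)$, while \eqref{NLS-dis-sys-c} and \eqref{NLS-dis-sys-d} yield \eqref{nc-Adler-Yamilov-r}.

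The main obstacle I anticipate is purely bookkeeping in the noncommutative setting: one must resist commuting factors and track the order of $\bm{p}$, $\bm{q}_{10}$, $\bm{q}_{11}$, $\bm{p}_{10}$ carefully when expanding products such as $\mathrm{M}(\bm{p}_{01},\bm{q}_{11};a)\mathrm{M}(\bm{p},\bm{q}_{01};b)$. The inverse factors $(1+\bm{p}\bm{q}_{11})^{-1}$ that appear in \eqref{nc-Adler-Yamilov-r} arise precisely from solving the off-diagonal coefficient relations for $\bm{p}_{01}$ and $\bm{q}_{01}$, so I would isolate those relations, multiply through by $1+\bm{p}\bm{q}_{11}$ (or $1+\bm{q}_{11}\bm{p}$ on the appropriate side), and check consistency using the Weinstein--Aronszajn--type identity already cited. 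Since the underlying algebra replicates \eqref{NLS-dis-sys}, no genuinely new computation is required beyond confirming that the reduced values of $\bm{f}$ and $\bm{g}$ are consistent with the shifted entries $a(n)+\bm{p}\bm{q}_{10}$ appearing inside the shifted matrices, which is exactly what the conservation law \eqref{conservation_law} and first integrals \eqref{first-ints-1a}, \eqref{first-ints-2a} guarantee.
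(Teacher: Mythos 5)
Your proposal follows the paper's own route: the theorem is proved there exactly as you describe, by substituting $\bm{f}=a(n)+\bm{p}\bm{q}_{10}$ and $\bm{g}=b(m)+\bm{p}\bm{q}_{01}$ (forced by the first integrals \eqref{first-ints-1a}, \eqref{first-ints-2a}) into the compatibility system \eqref{NLS-dis-sys}, solving \eqref{NLS-dis-sys-c}--\eqref{NLS-dis-sys-d} for $\bm{p}_{01},\bm{q}_{01}$ with the identity $\bm{x}(1+\bm{y}\bm{x})^{-1}=(1+\bm{x}\bm{y})^{-1}\bm{x}$, and checking the remaining relations reduce to identities. The only slight imprecision is your claim that \eqref{NLS-dis-sys-a}--\eqref{NLS-dis-sys-b} collapse ``under the centrality of $a(n),b(m)$'' alone --- they become identities only modulo the derived equations \eqref{nc-Adler-Yamilov-r} --- but this does not affect the correctness of the argument.
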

	
	\begin{remark}
		System \eqref{nc-Adler-Yamilov-r} constitutes a noncommutative version of the Adler--Yamilov (discrete NLS) system derived in \cite{SPS}.
	\end{remark}
	
	\begin{corollary}
		The noncommutative Adler--Yamilov type system \eqref{nc-Adler-Yamilov-r} constitutes an integrable discretisation of the noncommutative NLS system \eqref{NLS-eq-NC}.
	\end{corollary}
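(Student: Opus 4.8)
The plan is to deduce the statement directly from the definition of integrable discretisation given in Section~\ref{sec:discretisation}, using the constructions already in place. The key observation is that the Lax matrix \eqref{eq:matM-AY} of the Adler--Yamilov type system coincides, under the identification $\bm{x}=\bm{p}$ and $\bm{y}=\bm{q}$ with central parameter $a$, with the reduced Darboux matrix \eqref{DT2-10} of the continuous Lax pair \eqref{Lax-NLS-NC}. Hence the discrete Lax pair underlying \eqref{nc-Adler-Yamilov-Lax} is assembled entirely from Darboux matrices of the noncommutative NLS system \eqref{NLS-eq-NC}, and the relation \eqref{nc-Adler-Yamilov-Lax} is precisely the consistency of these Darboux matrices around the square in the Bianchi diagram of Figure~\ref{bianchi}.

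I would then trace the origin of the system. The full discretisations \eqref{NLS-dis1} and \eqref{NLS-dis} were obtained in Proposition~\ref{prop:NLS-2-sys} as the compatibility condition of the discrete Lax pair built from the Darboux matrix \eqref{DT-10}, and are therefore integrable discretisations of \eqref{NLS-eq-NC} in the sense of Section~\ref{sec:discretisation}. The Adler--Yamilov type system \eqref{nc-Adler-Yamilov-r} is extracted from these by setting the first integrals \eqref{first-ints-1a} and \eqref{first-ints-2a} equal to the central constants $a(n)$ and $b(m)$. Since this reduction only specialises the entries of the Darboux matrix to the form \eqref{eq:matM-AY} and leaves the discrete Lax pair intact, the reduced system inherits both its Lax representation and its status as a discretisation of the continuous flow.

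It then remains only to assemble these facts against the definition. By the preceding Theorem the system \eqref{nc-Adler-Yamilov-r} possesses the explicit Lax representation \eqref{nc-Adler-Yamilov-Lax}, so it is integrable; and by the first paragraph it arises as the compatibility condition of a discrete overdetermined system whose shift matrices are Darboux matrices of the continuous Lax pair \eqref{Lax-NLS-NC}. These are exactly the two requirements in the definition of integrable discretisation, which establishes the Corollary.

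I do not expect a substantive analytical obstacle, as the statement is a bookkeeping consequence of the classification in Section~\ref{sec-NLS} and the discretisation constructed above. The single point demanding care is the verification that the specialised matrix \eqref{eq:matM-AY} is genuinely a Darboux matrix for \eqref{Lax-NLS-NC}, i.e.\ that the intertwining relations \eqref{M-equation} are satisfied by the corresponding $\tilde{\bm{p}}$ and $\tilde{\bm{q}}$; this is precisely what the classification Theorem and its Corollary in Section~\ref{sec-NLS} guarantee, and it is what licenses interpreting the Bianchi square as a discretisation of the noncommutative NLS equation.
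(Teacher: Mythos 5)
Your argument is correct and is essentially the paper's own (implicit) justification: the Adler--Yamilov system \eqref{nc-Adler-Yamilov-r} is obtained as the compatibility around the square of the Darboux matrix \eqref{DT2-10} (equivalently \eqref{eq:matM-AY}) of the continuous Lax pair \eqref{Lax-NLS-NC}, reduced via the first integrals, and hence satisfies the definition of integrable discretisation of Section~\ref{sec:discretisation} while carrying the Lax representation \eqref{nc-Adler-Yamilov-Lax}. No substantive difference from the paper's route.
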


	\subsection{A noncommutative discrete Toda equation}
	
	In this subsection we consider the Lax pair
	$${\cal{S}}(\Psi) = {\rm{K}}(\bm{f},\bm{p}) \Psi, ~~~ {\cal{T}}(\Psi) = {\rm{M}}(\bm{p},\bm{q}_{01};b(m)) \Psi,$$
	where matrix $\rm{K}$ is defined in \eqref{DT-10-deg}, and matrix $\rm{M}$ is given in \eqref{eq:matM-AY}. The compatibility condition of this Lax pair,
	\begin{equation}\label{noncomm-d-Toda}
		{{\rm{K}}}(\bm{f}_{01},\bm{p}_{01}){{\rm{M}}}(\bm{p},\bm{q}_{01};b(m))={{\rm{M}}}(\bm{p}_{10},\bm{q}_{11};b(m)){{\rm{K}}}(\bm{f},\bm{p}),
	\end{equation}
	yields the following system of equations.
	\begin{subequations}\label{Toda-sys}
		\begin{align}
			&\bm{f}_{01}+\bm{p}\bm{q}_{01}=\bm{f}+\bm{p}_{10}\bm{q}_{11},\label{Toda-sys-a}\\
			&\bm{f}_{01}\left[b(m)+\bm{p}\bm{q}_{01}\right]+\bm{p}_{01}\bm{q}_{01}=\left[b(m)+\bm{p}_{10}\bm{q}_{11}\right]\bm{f}+\bm{p}_{10}\bm{p}^{-1},\label{Toda-sys-b}\\
			& \bm{f}_{01}\bm{p}+\bm{p}_{01}=\left[b(m)+\bm{p}_{10}\bm{q}_{11}\right]\bm{p},\label{Toda-sys-c}\\
			& \bm{p}_{01}^{-1}\left[b(m)+\bm{p}\bm{q}_{01}\right]=\bm{q}_{11}\bm{f}+\bm{p}^{-1},\label{Toda-sys-d}\\
			& \bm{p}_{01}^{-1}=\bm{q}_{11}.\label{Toda-sys-e}
		\end{align}
	\end{subequations}
	The last equation readily implies that $\bm{p} = \bm{q}_{10}^{-1}$, in view of which equations \eqref{Toda-sys-c} and \eqref{Toda-sys-d} yield
	\begin{equation}\label{f01}
		\bm{f}_{01}=b(m)+\bm{q}_{20}^{-1}\bm{q}_{11}-\bm{q}_{11}^{-1}\bm{q}_{10}
	\end{equation}
	and
	\begin{equation}\label{f}
		\bm{f}=b(m)+\bm{q}_{10}^{-1}\bm{q}_{01}-\bm{q}_{11}^{-1}\bm{q}_{10},
	\end{equation}
	respectively. Upon the same substitution and relations \eqref{f01} and \eqref{f}, the remaining equations of system \eqref{Toda-sys} become identities. Moreover, the compatibility condition ${\cal{T}}(\bm{f}) = \bm{f}_{01}$ of the above relations leads to the following equation.
	\begin{equation}\label{noncomm-d-Toda}
		\bm{q}_{20}^{-1}\bm{q}_{11}-\bm{q}_{11}^{-1}\bm{q}_{02}+\bm{q}_{12}^{-1}\bm{q}_{11}-\bm{q}_{11}^{-1}\bm{q}_{10}=b(m+1)-b(m).
	\end{equation} 
	If we consider that all $\bm{q}_{i,j}$, $i,j=0,1,2$, commute with each other and set $\bm{q}=\exp{(-w_{-1,-1})}$, then equation \eqref{noncomm-d-Toda} becomes 
	\begin{equation}\label{d-Toda}
		\exp{(w_{1,-1}-w)} -  \exp{(w-w_{-1,1})} + \exp{(w_{01}-w)} - \exp{(w-w_{0,-1})} =  b(m+1) - b(m),
	\end{equation}
	which is the fully discrete Toda equation \cite{Hirota, Suris}. Thus, equation \eqref{noncomm-d-Toda} may be regarded as a noncommutative version of the discrete Toda equation \eqref{d-Toda}.

	\section{B\"acklund transformations for the noncommutative Adler--Yamilov type of system}\label{Darboux-Baecklund}
	
	In this section, we construct Darboux and B\"acklund transformations for the noncommutative Adler--Yamilov system \eqref{nc-Adler-Yamilov-r}. Specifically, we construct a Darboux matrix $\rm{B}$ such that
	\begin{subequations}\label{BM}
		\begin{align}
			& {\cal{S}}\left({\rm{B}}\right){{\rm{M}}}(\bm{p},\bm{q}_{10};a)= {{\rm{M}}}(\tilde{\bm{p}},\tilde{\bm{q}}_{10};a){\rm{B}},\label{BM-a}\\
			& {\cal{T}}\left({\rm{B}}\right){{\rm{M}}}(\bm{p},\bm{q}_{01};b)= {{\rm{M}}}(\tilde{\bm{p}},\tilde{\bm{q}}_{01};b){\rm{B}},\label{BM-b}
		\end{align}
	\end{subequations}
	with matrix $\rm{M}$ given in \eqref{eq:matM-AY}. The simplest Ansatz for the Darboux matrix $\rm{B}$ is to depend linearly on the spectral parameter $\lambda$, i.e., 
	$$\rm{B}=\lambda {\rm B}^{(1)} +{\rm B}^{(0)} =\lambda \begin{pmatrix}
		\bm{\alpha} & \bm{\beta} \\
		\bm{\gamma} & \bm{\delta}
	\end{pmatrix}+\begin{pmatrix}
		\bm{A} & \bm{B} \\
		\bm{\Gamma} & \bm{\Delta}
	\end{pmatrix},$$
	with the matrix ${\rm B}^{(1)}$ being of rank 1. Substituting this expression into equations \eqref{BM} and equating the coefficients of powers of $\lambda$ to zero, we find that $\bm{\alpha}$ and $\bm{\delta}$ are independent of $n$ and $m$, and that $\bm{\beta}=\bm{\gamma}=0$. Equation \eqref{BM-a} then yields the following system of equations,
	\begin{subequations}\label{gen-BT-sys}
		\begin{align}
			&\bm{A}+\tilde{\bm{p}}\tilde{\bm{q}}_{10}\bm{\alpha}=\bm{\alpha}\bm{p}\bm{q}_{10}+\bm{A}_{10},\label{gen-BT-sys-a}\\
			&\left(a(n)+\tilde{\bm{p}}\tilde{\bm{q}}_{10}\right)\bm{A}+\tilde{\bm{p}}\bm{\Gamma}=\bm{A}_{10}\left(a(n)+\bm{p}\bm{q}_{10}\right)+\bm{B}_{10}\bm{q}_{10},\label{gen-BT-sys-b}\\
			&\bm{B}+\tilde{\bm{p}}\bm{\delta}=\bm{\alpha}\bm{p},\quad \tilde{\bm{q}}_{10}\bm{\alpha}=\bm{\Gamma}_{10}+\bm{\delta}_{10}\bm{q}_{10},\label{gen-BT-sys-c}\\
			&\left(a(n)+\tilde{\bm{p}}\tilde{\bm{q}}_{10}\right)\bm{B}+\tilde{\bm{p}}\bm{\Delta}=\bm{A}_{10}\bm{p}+\bm{B}_{10},\label{gen-BT-sys-d}\\
			&\tilde{\bm{q}}_{10}\bm{A}+\bm{\Gamma}=\bm{\Gamma}_{10}\left(a(n)+\bm{p}\bm{q}_{10}\right)+\bm{\Delta}_{10}\bm{q}_{10},\label{gen-BT-sys-e}\\
			&\tilde{\bm{q}}_{10}\bm{B}+\bm{\Delta}=\bm{\Gamma}_{10}\bm{p}+\bm{\Delta}_{10}.\label{gen-BT-sys-f}
		\end{align}
	\end{subequations}
	Similarly, equation \eqref{BM-b} yields a system analogous to \eqref{gen-BT-sys}, where all indices ${}_{10}$ are replaced by ${}_{01}$ and $a(n)$ is replaced by $b(m)$.
	
	Since $\rank {\rm B}^{(1)}=1$, we can choose either $\bm{\alpha}\neq 0$ with $\bm{\delta}=0$, or $\bm{\alpha}=0$ and $\bm{\delta}\neq 0$. We consider the former case; the latter can be treated analogously. Assuming $\bm{\alpha}\neq 0$, we can rescale it to $\bm{\alpha} = 1$ without loss of generality. Then equation \eqref{gen-BT-sys-c} yields $\bm{B}=\bm{p}$ and $\bm{\Gamma}=\tilde{\bm{q}}$. Substituting all these into \eqref{gen-BT-sys-f} and its counterpart following from \eqref{BM-b}, we find that $\bm{\Delta}$ must be independent of $n$ and $m$. Without loss of generality, we set $\bm{\Delta} =1$. With these choices, equations 
	\begin{subequations}\label{gen-BT-AY-10}
		\begin{equation}{\small{
					\bm{A}_{10}-\bm{A}=\tilde{\bm{p}}\tilde{\bm{q}}_{10}-\bm{p}\bm{q}_{10},~~ \bm{p}_{10}=\left({\bm{p}}{\bm{q}}_{10} - {\bm{A}}+a(n) \right)\bm{p}+\tilde{\bm{p}},~~ \tilde{\bm{q}}=\tilde{\bm{q}}_{10}\left(\bm{p}\bm{q}_{10}-\bm{A}+a(n)\right)+\bm{q}_{10},}}\label{gen-BT-AY-10-a}
		\end{equation}
		follow from \eqref{gen-BT-sys-a}, \eqref{gen-BT-sys-d}, and \eqref{gen-BT-sys-e}, while equation \eqref{gen-BT-sys-b} is satisfied as a consequence of them. Moreover, equations 
		\begin{equation}{\small{    
					\bm{A}_{01}-\bm{A}=\tilde{\bm{p}}\tilde{\bm{q}}_{01}-\bm{p}\bm{q}_{01},~~ \bm{p}_{01}=\left({\bm{p}}{\bm{q}}_{01} - {\bm{A}}+b(m) \right)\bm{p}+\tilde{\bm{p}},~~ \tilde{\bm{q}}=\tilde{\bm{q}}_{01}\left(\bm{p}\bm{q}_{01}-\bm{A}+b(m)\right)+\bm{q}_{01}}}\label{gen-BT-AY-10-b}
		\end{equation}
	\end{subequations}
	follow from \eqref{BM-b} and the above choices for the elements of matrix $\rm{B}$. 
	
	A consequence of system \eqref{gen-BT-AY-10} is that
	\begin{equation} \label{gen-BT-AY-10-cons}
		({\cal{S}}-1)({\bm{p}} \tilde{\bm{q}}- {\bm{A}}) = [{\bm{p}} \tilde{\bm{q}}_{10}, {\bm{A}} - {\bm{p}} {\bm{q}}_{10}],  ~~~ ({\cal{T}}-1)({\bm{p}} \tilde{\bm{q}}- {\bm{A}}) = [{\bm{p}} \tilde{\bm{q}}_{01}, {\bm{A}} - {\bm{p}} {\bm{q}}_{01}],
	\end{equation}
	which imply that ${\bm{A}} = {\bm{p}} \tilde{\bm{q}} + \epsilon$, $\epsilon \in C({\mathfrak{R}})$. Indeed, with this choice for $\bm{A}$, the second and third relations in \eqref{gen-BT-AY-10-a} and \eqref{gen-BT-AY-10-b} become
	\begin{subequations}\label{gen-BT-AY-10-inter}
		\begin{equation} 
			\bm{p}_{10}=\tilde{\bm{p}}+\left(a(n)- \epsilon\right)\bm{p}(\tilde{\bm{q}}_{10}\bm{p}+1)^{-1},
			\quad \bm{q}_{10}=\tilde{\bm{q}}-\left(a(n)-\epsilon \right)(\tilde{\bm{q}}_{10}\bm{p}+1)^{-1}\tilde{\bm{q}}_{10}
		\end{equation}
		and
		\begin{equation}
			\bm{p}_{01}=\tilde{\bm{p}}+\left(b(m)-\epsilon\right)\bm{p}(\tilde{\bm{q}}_{01}\bm{p}+1)^{-1}, \quad
			\bm{q}_{01}=\tilde{\bm{q}}-\left(b(m)-\epsilon\right)(\tilde{\bm{q}}_{01}\bm{p}+1)^{-1}\tilde{\bm{q}}_{01},
		\end{equation}
	\end{subequations}
	respectively, while the corresponding first relations become identities due to \eqref{gen-BT-AY-10-inter}. Moreover, the commutators on the right-hand sides of \eqref{gen-BT-AY-10-cons} are zero modulo equations \eqref{gen-BT-AY-10-inter}.
	
	In a similar manner, we can analyse the case where $\bm{\alpha}=0$ and $\bm{\delta}\neq 0$. Our analysis can be summarised in the following statement.
	
	\begin{proposition} Let $\rm{B}=\lambda {\rm B}^{(1)} +{\rm B}^{(0)}$, where $\rank {\rm B}^{(1)}=1$, be a Darboux matrix associated with the Lax pair \eqref{nc-Adler-Yamilov-Lax} of the noncommutative Adler--Yamilov system \eqref{nc-Adler-Yamilov-r}. All Darboux matrices of this form, along with their corresponding auto-B\"acklund transformations, fall into one of the following two cases.
		\begin{enumerate}
			\item Darboux matrix has the form 
			\begin{equation}\label{Darboux-matrix-AY}
				\rm{B}=\lambda
				\begin{pmatrix}
					1 & 0 \\
					0 & 0
				\end{pmatrix}
				+
				\begin{pmatrix}
					\bm{p}\tilde{\bm{q}}+\epsilon & \bm{p} \\
					\tilde{\bm{q}} & 1
				\end{pmatrix},
			\end{equation}
			and its entries satisfy the system of difference equations
			\begin{subequations}\label{BT-AY-10}
				\begin{align}
					&  \bm{p}_{10}=\tilde{\bm{p}}+\left(a(n)- \epsilon\right)\bm{p}(\tilde{\bm{q}}_{10}\bm{p}+1)^{-1},
					\quad \bm{q}_{10}=\tilde{\bm{q}}-\left(a(n)-\epsilon \right)(\tilde{\bm{q}}_{10}\bm{p}+1)^{-1}\tilde{\bm{q}}_{10}, \label{BT-AY-10-a}\\
					& \bm{p}_{01}=\tilde{\bm{p}}+\left(b(m)-\epsilon\right)\bm{p}(\tilde{\bm{q}}_{01}\bm{p}+1)^{-1}, \quad
					\bm{q}_{01}=\tilde{\bm{q}}-\left(b(m)-\epsilon\right)(\tilde{\bm{q}}_{01}\bm{p}+1)^{-1}\tilde{\bm{q}}_{01}.\label{BT-AY-10-b}
				\end{align}
			\end{subequations}
			
			\item Darboux matrix has the form
			\begin{equation}\label{Darboux-matrix-AY-2}
				\rm{B}=\lambda
				\begin{pmatrix}
					0 & 0 \\
					0 & 1
				\end{pmatrix}
				+
				\begin{pmatrix}
					\bm{q}\tilde{\bm{p}} + \epsilon& -\tilde{\bm{p}} \\
					-\bm{q} & 1
				\end{pmatrix},
			\end{equation}
			and its entries satisfy the system of difference equations
		\end{enumerate}
		\begin{subequations}\label{BT-AY-01}
			\begin{align}
				& \tilde{\bm{p}}_{10}=\bm{p}+\left(a(n)-\epsilon \right)(1+\tilde{\bm{p}}\bm{q}_{10})^{-1}\tilde{\bm{p}},\quad \tilde{\bm{q}}_{10}=\bm{q}-\left( a(n)-\epsilon \right)\bm{q}_{10}(1+\tilde{\bm{p}}\bm{q}_{10})^{-1},\label{BT-AY-01-a}\\
				& \tilde{\bm{p}}_{01}=\bm{p}+\left( b(m)-\epsilon \right) (1+\tilde{\bm{p}}\bm{q}_{01})^{-1}\tilde{\bm{p}},\quad \tilde{\bm{q}}_{01}=\bm{q}-\left( b(m)-\epsilon \right)\bm{q}_{01}(1+\tilde{\bm{p}}\bm{q}_{01})^{-1}.\label{BT-AY-01-b}
			\end{align}
		\end{subequations}
	\end{proposition}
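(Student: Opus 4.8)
The statement collects the case analysis begun in the paragraphs above, so the plan is to organise that analysis into the two rank-one branches and to complete the second one. The starting point is the covariance system \eqref{BM}: I would substitute the linear ansatz $\mathrm{B}=\lambda \mathrm{B}^{(1)}+\mathrm{B}^{(0)}$, with entries $\bm{\alpha},\bm{\beta},\bm{\gamma},\bm{\delta}$ for $\mathrm{B}^{(1)}$ and $\bm{A},\bm{B},\bm{\Gamma},\bm{\Delta}$ for $\mathrm{B}^{(0)}$, into \eqref{BM-a} and \eqref{BM-b}. Since the matrix $\mathrm{M}$ of \eqref{eq:matM-AY} is affine in $\lambda$ with diagonal leading part, each relation is quadratic in $\lambda$ and splits into three matrix equations per lattice direction. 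The top-degree coefficients annihilate the off-diagonal entries, giving $\bm{\beta}=\bm{\gamma}=0$, and force $\bm{\alpha},\bm{\delta}$ to be independent of $n$ and $m$. Over the division ring $\mathfrak{R}$ the resulting diagonal $\mathrm{B}^{(1)}$ has rank one precisely when exactly one of $\bm{\alpha},\bm{\delta}$ vanishes, which produces the two cases of the proposition.

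For the branch $\bm{\alpha}\neq 0$, $\bm{\delta}=0$ I would normalise $\bm{\alpha}=1$ using the scaling freedom, after which the subleading coefficients assemble into system \eqref{gen-BT-sys} and its $m$-counterpart. The off-diagonal $\lambda^1$ relations \eqref{gen-BT-sys-c} fix $\bm{B}=\bm{p}$ and $\bm{\Gamma}=\tilde{\bm{q}}$, and feeding these into \eqref{gen-BT-sys-f} forces $\bm{\Delta}$ to be shift-invariant, which I normalise to $1$. What survives are the relations \eqref{gen-BT-AY-10} connecting $\bm{A}$ to the fields. The decisive step is to combine the three relations in each direction into the consistency identities \eqref{gen-BT-AY-10-cons}, whose left-hand sides are lattice differences of $\bm{p}\tilde{\bm{q}}-\bm{A}$ while the right-hand sides are commutators; reading off from these that $\bm{A}=\bm{p}\tilde{\bm{q}}+\epsilon$ with $\epsilon\in C(\mathfrak{R})$ is the core of the argument. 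Once this is established, back-substitution together with the identity $\bm{x}(1+\bm{y}\bm{x})^{-1}=(1+\bm{x}\bm{y})^{-1}\bm{x}$ turns the implicit relations into the explicit B\"acklund equations \eqref{BT-AY-10}, with the first relations of \eqref{gen-BT-AY-10} collapsing to identities. The branch $\bm{\alpha}=0$, $\bm{\delta}\neq 0$ would be handled identically, normalising $\bm{\delta}=1$ and repeating the elimination to obtain the Darboux matrix \eqref{Darboux-matrix-AY-2} and the transformation \eqref{BT-AY-01}; the sign changes there trace back to the leading part of $\mathrm{B}^{(1)}$ now sitting in the lower-right corner.

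The main obstacle I anticipate is exactly the passage \eqref{gen-BT-AY-10-cons}: in the commutative case one would integrate a telescoping difference, whereas here one must argue that a quantity whose $\mathcal{S}$- and $\mathcal{T}$-differences are both commutators of the prescribed shape can differ from $\bm{p}\tilde{\bm{q}}$ only by an element that is simultaneously shift-invariant and central in $\mathfrak{R}$. Beyond that, the bookkeeping of noncommutative orderings through every elimination, and the \emph{a posteriori} check that the residual first relations and the commutators in \eqref{gen-BT-AY-10-cons} vanish modulo \eqref{gen-BT-AY-10-inter}, is where the calculation is most error-prone; I would organise it by carrying $\bm{A}$ symbolically until the central reduction is justified, and only then substituting to produce the final B\"acklund systems.
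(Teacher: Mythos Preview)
Your proposal is correct and follows essentially the same approach as the paper: the paper's own argument, presented in the paragraphs immediately preceding the proposition, proceeds through exactly the steps you outline, from the $\lambda$-expansion of \eqref{BM} through the eliminations \eqref{gen-BT-sys-c}, \eqref{gen-BT-sys-f} to \eqref{gen-BT-AY-10} and then to the consistency relations \eqref{gen-BT-AY-10-cons}. Your concern about the passage from \eqref{gen-BT-AY-10-cons} to $\bm{A}=\bm{p}\tilde{\bm{q}}+\epsilon$ is well placed: the paper does not derive this as a necessary consequence but instead asserts it and then verifies \emph{a posteriori} that with this choice the remaining equations collapse and the commutators on the right of \eqref{gen-BT-AY-10-cons} vanish modulo \eqref{gen-BT-AY-10-inter}, which is precisely the verification strategy you describe in your final paragraph.
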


	\begin{remark}
		Assuming that all the fields in \eqref{BT-AY-10} and \eqref{BT-AY-01} commute, we obtain matrix (27) and system (28) in \cite{FKRX}.
	\end{remark}

	\section{Conclusions}\label{conclusions}
	
	We presented a method for constructing noncommutative integrable systems of difference equations by employing Darboux transformations associated with noncommutative integrable partial differential equations and their corresponding Lax pairs. As an illustrative example, we considered a noncommutative version of the NLS system, specifically system \eqref{NLS-eq-NC}, and its Lax pair \eqref{Lax-NLS-NC}.
	
	In particular, we derived Darboux matrices that are linear in the spectral parameter, along with the associated transformations for the noncommutative NLS system, namely \eqref{eq:T-10} and \eqref{eq:T-01}. The existence of first integrals for systems \eqref{BT-10} and \eqref{BT-01} enabled their reduction to \eqref{eq:T-10-deg} and \eqref{eq:T-01-deg}, respectively. Using these noncommutative Darboux matrices, we constructed a noncommutative discrete integrable NLS-type system, which can be further reduced to a noncommutative Adler--Yamilov-type system, given by \eqref{nc-Adler-Yamilov-r}. Additionally, we constructed a fully discrete noncommutative Toda-type equation, namely \eqref{noncomm-d-Toda}. Furthermore, by suitably generalising the approach introduced in \cite{FKRX}, we derived Darboux and B\"acklund transformations for the noncommutative Adler--Yamilov system \eqref{nc-Adler-Yamilov-r}.
	
	It is well known that Darboux and B\"acklund transformations can be employed to construct solutions of associated integrable systems. In particular, it would be interesting to construct solutions of system \eqref{nc-Adler-Yamilov-r}, starting from a simple seed solution and applying either B\"acklund transformation \eqref{BT-AY-10} or \eqref{BT-AY-01}. This methodology was demonstrated in \cite{FKRX} for the commutative version of system \eqref{nc-Adler-Yamilov-r}, and worth exploring whether the resulting soliton solutions can be extended to the noncommutative setting.
	
	Darboux and B\"acklund transformations may involve additional functions (potentials), such as $\bm{f}$ in \eqref{BT-10} and $\bm{g}$ in \eqref{BT-01}. However, there exist Darboux matrices that depend only on a potential, with the corresponding auto-B\"acklund transformations also expressed using the potential. Such transformations were studied in \cite{PaulX-4}, where an auto-B\"acklund transformation for the noncommutative discrete KdV equation was derived. It was further shown that the associated superposition principle gives rise to a noncommutative Yang--Baxter map.
	
	Noncommutative Yang--Baxter and tetrahedron maps have attracted considerable attention (see, for instance, \cite{Doliwa-Kashaev, Kassotakis-Kouloukas} and the references therein). The Darboux matrix \eqref{DT2-10} can be used to construct a noncommutative Yang--Baxter maps \cite{Sokor-Nikitina}, while the Darboux matrix \eqref{DT-10-deg} can be used to derive a Zamolodchikov tetrahedron map \cite{Sokor-2022}, which can be restricted to a noncommutative version of Sergeev's map \cite{Kashaev-Sergeev}. There is also growing interest in pentagon maps and their connection to matrix refactorisation problems \cite{Dimakis-Hoissen-2015, Kassotakis-2023}. Recent classification results on pentagon maps appeared in~\cite{EKT-2024}, where the first examples of noncommutative quadrirational pentagon maps were introduced. It would be interesting to further explore the connection between matrix refactorisation problems for noncommutative Darboux matrices and pentagon maps.
	
	The existence of a Lax pair can be regarded as a criterion for integrability and, as we demonstrated, may be employed to derive other integrability properties of the associated system. In particular, a Lax pair can be used to obtain symmetries of the corresponding integrable system. It was shown in \cite{PaulX-4} that system \eqref{nc-Adler-Yamilov-r} admits hierarchies of symmetries in both lattice directions. For commutative systems of difference equations, symmetries can also be derived using the theory of integrability conditions (see, for example, \cite{BX,PaulX, Sasha-Pavlos, PaulX-2}). This theory has recently been extended to the study of noncommutative differential-difference equations in \cite{NW}, where a related classification problem was also addressed. It would be of interest to further develop the theory of integrability conditions for noncommutative difference equations, in analogy with the commutative case.
	
	\section{Acknowledgements}
	The authors thank the London Mathematical Society for funding the visit of Sotiris Konstantinou-Rizos to the UK (Grant {\emph{Research in Pairs}}, Project No. 41962). The work of Sotiris Konstantinou-Rizos was supported by a development programme for the Regional Scientific and Educational Mathematical Centre of the Yaroslavl State University with financial support from the Ministry of Science and Higher Education of the Russian Federation (Agreement No. 075-02-2025-1636). We would like to thank S. Igonin for useful discussions.

\end{document}